\newcommand{\mydoi}[2]{\href{http://dx.doi.org/#1}{#2}}
\newcommand{\myeprint}[2]{\href{http://arxiv.org/abs/#1}{#2}}
\newcommand{\myurl}[2]{\href{#1}{#2}}
\newcommand{\G}{\mathcal{G}}
\newcommand{\T}{\mathcal{T}}
\DeclareMathOperator{\pred}{pred}
\DeclareMathOperator{\head}{head}
\DeclareMathOperator{\tail}{tail}
\DeclareMathOperator{\prune}{prune}
\DeclareMathOperator{\ext}{ext}
\DeclareMathOperator{\real}{real}
\DeclareMathOperator{\polylog}{polylog}
\definecolor{notecol}{gray}{0.4}
\newcommand{\notestyle}[1]{{\footnotesize\textcolor{notecol}{#1}\par}}
\newcommand{\note}[1]{\marginnote[\raggedright\notestyle{#1}]{\raggedleft\notestyle{#1}}}
\newtheorem{theorem}{Theorem}
\newtheorem{lemma}[theorem]{Lemma}
\newtheorem{corollary}[theorem]{Corollary}
\theoremstyle{remark}
\newtheorem{remark}{Remark}
\begin{document}

\vspace*{5ex}
\begin{center}
    {\Large \textbf{Distributed Maximal Matching:\\Greedy is Optimal}\par}

    \bigskip
    \bigskip

    Juho Hirvonen and Jukka Suomela

    \bigskip
    Helsinki Institute for Information Technology HIIT \\
    University of Helsinki
\end{center}

\bigskip
\begin{quote}
    \textbf{Abstract.}
    We study distributed algorithms that find a maximal matching in an anonymous, edge-coloured graph. If the edges are properly coloured with $k$ colours, there is a trivial greedy algorithm that finds a maximal matching in $k-1$ synchronous communication rounds. The present work shows that the greedy algorithm is optimal in the general case: any algorithm that finds a maximal matching in anonymous, $k$-edge-coloured graphs requires $k-1$ rounds.
    
    If we focus on graphs of maximum degree $\Delta$, it is known that a maximal matching can be found in $O(\Delta + \log^* k)$ rounds, and prior work implies a lower bound of $\Omega(\polylog(\Delta) + \log^* k)$ rounds. Our work closes the gap between upper and lower bounds: the complexity is $\Theta(\Delta + \log^* k)$ rounds. To our knowledge, this is the first linear-in-$\Delta$ lower bound for the distributed complexity of a classical graph problem.
\end{quote}

\bigskip
\begin{quote}
    \textbf{Corresponding author:} \\
    Jukka Suomela \\
    Helsinki Institute for Information Technology HIIT \\
    P.O. Box 68, FI-00014 University of Helsinki, Finland \\
    jukka.suomela@cs.helsinki.fi
\end{quote}

\thispagestyle{empty}
\setcounter{page}{0}
\newpage

\section{Introduction}

In the study of deterministic distributed graph algorithms, there are two parameters that are commonly used to describe the computational complexity of a graph problem: $n$, the number of nodes in the graph, and $\Delta$, the maximum degree of the graph. For a wide range of problems, the complexity is well-understood as a function of $n$, but understanding the complexity as a function of $\Delta$ is one of the major open problems in the area. For example, the maximal matching problem can be solved in $O(\Delta + \log^* n)$ rounds \cite{panconesi01some}, while the best lower bound is $\Omega(\polylog(\Delta) + \log^* n)$  \cite{linial92locality, kuhn04what, kuhn06price, kuhn10local}.

The present works gives the first tight lower bound that is linear in $\Delta$ for a classical graph problem. In particular, we study the problem of finding a \emph{maximal matching in anonymous, edge-coloured graphs}. If the edges are $k$-coloured, the problem can be solved in $O(\Delta + \log^* k)$ rounds with an adaptation of a simple deterministic algorithm \cite{panconesi01some}. It is well-known that the complexity is $\Omega(\log^* k)$ rounds \cite{linial92locality}; we close the case by proving a lower bound of $\Omega(\Delta)$ rounds.

\subsection{Related Work}

For many graph problems, the state-of-the-art algorithms are extremely fast even if the network is very large---provided that $\Delta$ is small. For example, the following problems can be solved in $O(\Delta + \log^* n)$ synchronous communication rounds (assuming $O(\log n)$-bit node identifiers):
\begin{itemize}[noitemsep]
    \item maximal matching \cite{panconesi01some},
    \item vertex colouring with $\Delta+1$ colours \cite{barenboim09distributed, kuhn09weak},
    \item edge colouring with $2\Delta-1$ colours \cite{panconesi01some}.
\end{itemize}
There are also problems that can be solved in $O(\Delta)$ rounds, independently of $n$ (even in anonymous networks without unique identifiers):
\begin{itemize}[noitemsep]
    \item maximal matching in $2$-coloured graphs \cite{hanckowiak98distributed},
    \item maximal edge packing \cite{astrand10vc-sc},
    \item $2$-approximation of minimum vertex cover \cite{astrand10vc-sc}.
\end{itemize}
For each of these problems, the dependence on $n$ in the running time is well-understood. In particular, Linial's \cite{linial92locality} lower bound shows that maximal matching, vertex colouring, and edge colouring require $\Omega(\log^* n)$ rounds, even if $\Delta = 2$.

However, we do not yet understand the dependence on $\Delta$. For example, the best known lower bound for the maximal matching problem is \emph{logarithmic} in $\Delta$ \cite{kuhn04what, kuhn06price, kuhn10local}, while the above upper bounds are \emph{linear} in $\Delta$.

Some $\polylog(\Delta)$ upper bounds are known for graph problems. For example, good approximations of fractional matchings can be found in $\polylog(\Delta)$ rounds \cite{kuhn06price}; however, this does not seem to yield a deterministic $\polylog(\Delta)$-time algorithm for any of the above problems. Ha\'n\'ckowiak et al.'s~\cite{hanckowiak01distributed} algorithm finds a maximal matching in $\polylog(n)$ rounds, avoiding the linear dependence on $\Delta$; however, it comes at the cost of a non-optimal dependence on $n$.

It is easy to come up with an artificial problem with the complexity of $\Theta(\Delta)$, but so far no such tight results are known for classical graph problems such as maximal matchings. The lower-bound result by Kuhn and Wattenhofer~\cite{kuhn06complexity} comes close, but it only applies to a restricted family of algorithms.

\subsection{Greedy Maximal Matching}\label{ssec:gmm}

We will focus on the task of finding a maximal matching in an edge-coloured graph, using a deterministic distributed algorithm in a network of anonymous nodes (see Section~\ref{sec:prelim} for formally precise definitions and e.g. the survey~\cite{suomela09survey} for more background information).

If the graph is edge-coloured with $k$ colours, there is a very simple greedy algorithm that solves the problem in $k$ steps: We start with an empty matching $M \gets \emptyset$. Then, in step $i$ we consider all edges of colour $i$ in parallel. If an edge $\{u,v\}$ is of colour $i$, and neither $u$ nor $v$ is matched, we add $\{u,v\}$ to $M$. The following figure illustrates the greedy algorithm for $k = 4$; the thick edges indicate matching~$M$.
\begin{center}
    \includegraphics[page=1]{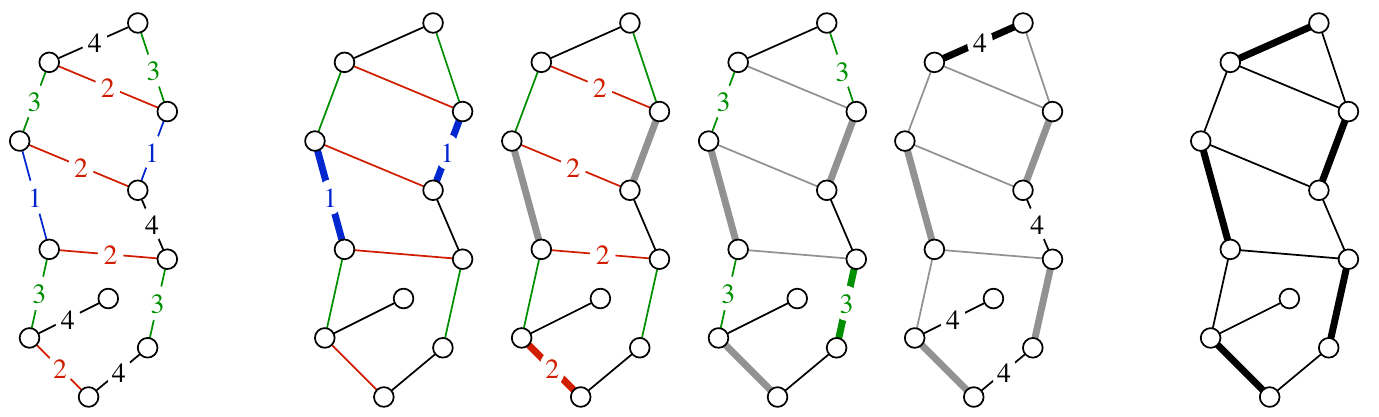} 
\end{center}

To analyse the exact running time of the greedy algorithm, we need to fix the model of computation. As usual, each node is a computational entity and there is an edge if the nodes can exchange messages with each other---the same graph is both the problem instance and the network topology. Throughout this work, the running time is defined to be the number of synchronous communication rounds. Initially, each node knows the colours of its incident edges. In every round, each node in parallel (1)~sends a message to each of its neighbours, (2)~receives a message from each of its neighbours, and (3)~updates its own state. After each round, a node can stop and announce its \emph{local output}: whether it is matched and with which neighbour.

With these definitions, it is straightforward to verify that the running time of the greedy algorithm is at most $k-1$ communication rounds. To see this, note that the first step of the greedy algorithm does not require any communication: if a node has an incident edge of colour $1$, it is matched along this edge. Hence we have the following lemma.
\begin{lemma}\label{lem:ub}
    Let $k$ be a positive integer. There exists a deterministic distributed algorithm with running time $k-1$ that finds a maximal matching in any anonymous, $k$-edge-coloured graph.
\end{lemma}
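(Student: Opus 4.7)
The plan is to implement the greedy algorithm described above in a distributed fashion, shaving one round off the obvious bound of $k$ by noticing that the first step of the greedy procedure requires no communication at all. Each node maintains a local variable encoding its current matching status: either \emph{unmatched}, or \emph{matched along edge $e$} for some specific incident edge $e$.

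The key observation is that step $1$ can be performed with zero messages. Because the edge-colouring is proper, every node is incident to at most one edge of colour $1$; whenever such an edge $\{u,v\}$ exists, both endpoints independently and simultaneously commit to matching along it before any round begins. For each $i = 2, 3, \ldots, k$ I would then implement step $i$ using exactly one communication round (the $(i-1)$-st). At the start of this round every node sends its current status to all its neighbours; afterwards any still-unmatched node $u$ with an incident colour-$i$ edge $\{u,v\}$ whose other endpoint $v$ reported itself unmatched sets its status to ``matched along $\{u,v\}$''. Properness of the colouring ensures that the colour-$i$ edge at $u$, if any, is unique, so the rule is unambiguous, and the symmetric form of the rule guarantees that both endpoints of a newly matched edge reach the same conclusion. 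In total this accounts for $k-1$ communication rounds.

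For correctness, the edges produced form a matching because a node's status, once changed from \emph{unmatched}, is never revised, and no node is ever matched along two different edges. For maximality, suppose an edge $\{u,v\}$ of colour $i$ had both endpoints unmatched at termination; then in particular both were unmatched entering step $i$, and the rule above would have matched them along $\{u,v\}$, a contradiction. No part of this argument looks like a genuine obstacle; the only interesting point is realising that step $1$ is free of communication, which is exactly what distinguishes the round count $k-1$ from the naive $k$.
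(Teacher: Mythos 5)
Your proposal is correct and matches the paper's argument: it is exactly the greedy colour-by-colour algorithm, with the same key observation that the colour-$1$ step needs no communication (both endpoints see the colour-$1$ edge locally), so steps $2,\dotsc,k$ consume one round each for a total of $k-1$. The paper states this more briefly, but the algorithm, the round count, and the correctness/maximality reasoning are essentially identical to yours.
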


We can also easily verify that the analysis is tight, i.e., the worst-case running time of the greedy algorithm is exactly $k-1$ rounds. The following figure illustrates a worst-case input for $k = 4$. In the greedy algorithm $u$ is unmatched while $v$ is matched. However, radius-$2$ neighbourhoods of $u$ and $v$ are indistinguishable; in order to produce a different output, we must propagate information over distance $k-1 = 3$: from $x$ to $u$ and from $y$ to $v$. Hence any faithful implementation of the greedy algorithm requires at least $k-1$ communication rounds.
\begin{center}
    \includegraphics[page=2]{figs.pdf} 
\end{center}

Naturally, if our goal is to find a maximal matching, there is a wide range of possible algorithms, and in many special cases we already know how to beat the greedy algorithm. However, we show that \emph{in the general case, the greedy algorithm is optimal}. The main contribution is summarised in the following theorem.

\begin{theorem}\label{thm:lb2}
    Let $k$ be a positive integer. A deterministic distributed algorithm that finds a maximal matching in any anonymous, $k$-edge-coloured graphs requires at least $k-1$ communication rounds.
\end{theorem}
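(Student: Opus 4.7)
The proof strategy is an indistinguishability / locality argument. Assume, for contradiction, that there is a deterministic distributed algorithm $A$ that outputs a maximal matching in $T = k-2$ rounds on every $k$-edge-coloured anonymous graph. The standard locality observation is that, after $T$ rounds, the state of a node $v$ depends only on its radius-$T$ neighbourhood viewed as a rooted, edge-coloured graph: in an anonymous network the only input at $v$ is the multiset of colours on its incident edges, and $T$ rounds of message passing reach exactly distance $T$. In particular, if two nodes $u,v$ have isomorphic radius-$T$ edge-coloured neighbourhoods, then $A$ produces the same local output at $u$ and at $v$; either both are declared unmatched, or both are declared matched along the incident edge of some specific colour $c$.

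The core of the argument is then to exhibit a properly $k$-edge-coloured anonymous graph containing two distinguished edges $e$ and $e'$ whose radius-$T$ views (as rooted, edge-coloured graphs) coincide, but such that any maximal matching of the graph must include exactly one of $e,e'$. A natural template is the worst-case path for the greedy algorithm from Section~\ref{ssec:gmm}: one builds a path (or a pair of paths spliced together) whose colouring is symmetric up to radius $k-2$ around the two critical edges, but whose colourings of the two \emph{tails} at distance exactly $k-1$ differ in a controlled way, breaking the symmetry at distance $T{+}1$. The tails are designed so that, regardless of the algorithm, on one side the unique edge that can legally augment the matching is $e$, while on the other side the analogous edge $e'$ is already blocked by a forced matched neighbour. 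By the locality claim, $A$ must give the same verdict at $e$ and $e'$, contradicting maximality.

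Concretely, the plan is: (i)~formalise the locality claim above for edge-coloured anonymous networks; (ii)~construct the "extension" gadgets, i.e.\ properly coloured paths of length $k-1$ attached to either side of the core, agreeing in their first $k-2$ edge colours and diverging only on the last edge colour, such that the extension on one side forces its adjacent core edge into every maximal matching while the extension on the other side forbids the analogous core edge; and (iii)~conclude by a short case analysis: whether $A$ outputs "matched" or "unmatched" at the two critical edges, either the output is not a matching (two adjacent matched edges) or it is not maximal (an augmenting edge remains). The locality step (i) and the concluding case analysis (iii) are essentially routine.

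The main obstacle is step (ii): designing the colouring so that three conditions hold simultaneously — the overall edge-colouring is proper and uses only $k$ colours, the two radius-$T$ views around $e$ and $e'$ are isomorphic as rooted edge-coloured graphs, and the structure just beyond radius $T$ has a \emph{universal} forcing effect on every maximal matching, not merely on the one produced by the greedy algorithm. Once such a gadget is available, the theorem follows by combining it with the locality lemma.
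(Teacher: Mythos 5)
You have the right outer framework (anonymity plus locality forces equal outputs at nodes with isomorphic views, and one then needs two such nodes that no valid output can treat equally), and your steps (i) and (iii) are indeed routine. But the part you defer as the main obstacle, step (ii), is the entire content of the theorem, and the concrete route you sketch for it would fail. A path-shaped gadget cannot work at all: in a $k$-edge-coloured graph of maximum degree $2$, the edge colours give a proper $k$-colouring of the line graph (a path), so Cole--Vishkin style colour reduction yields a maximal matching in $O(\log^* k)$ rounds even in the anonymous setting --- far below $k-1$ for large $k$. Any instance certifying a $(k-1)$-round lower bound must have degree growing linearly in $k$; the paper's hard instances are $(k-1)$-regular infinite trees, and Corollary~\ref{cor:lb} is precisely an $\Omega(\Delta)$ statement. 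Moreover, you ask for a \emph{fixed} pair of instances whose structure beyond the visible radius has a universal forcing effect on \emph{every} maximal matching; you do not exhibit such a gadget, and the paper does not either --- already its base case branches on how the given algorithm behaves on single-edge instances (Lemma~\ref{lem:0temp}), so the construction is inherently adaptive to the algorithm rather than universal.

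What the paper actually does is an induction on $h = 1, \dotsc, k-1$ maintaining an $h$-critical pair of $h$-templates (Section~\ref{ssec:crit}): two tree instances that agree up to depth $h$ but on which the algorithm's output at the root differs (matched along an incident colour versus not). The step from $h$ to $h+1$ extends both templates by new edge colours chosen using the algorithm's own outputs (the colour pickers of Section~\ref{ssec:ind}), exploits the symmetry of realisations --- identical views at all nodes in an equivalence class, which forces the algorithm to produce a \emph{perfect} matching whenever free colours remain (Corollary~\ref{cor:realequiv}, Lemma~\ref{lem:realperf}) --- and then glues the two extended templates along one edge and finds the next critical node by a parity argument (Lemma~\ref{lem:critnode}). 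None of these ingredients (high degree, adaptivity to $A$, the perfect-matching symmetry lemma, the gluing-plus-parity step) appears in your outline, so the proposal has a genuine gap exactly where the real work lies, and its one concrete suggestion (a spliced path with forcing tails) is provably a dead end.
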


We prove Theorem~\ref{thm:lb2} in Section~\ref{sec:lb}. The lower bound holds even if we allow arbitrarily large messages and unbounded local computations, while the matching upper bound is achieved by a simple algorithm that uses only small messages, little memory, and trivial state transitions.

\subsection{Special Cases}

Let us now return to the case of bounded-degree graph. If $k \gg \Delta$, we can use Cole--Vishkin \cite{cole86deterministic} style colour reduction techniques to considerably speed up the algorithm. For example, a straightforward adaptation of Panconesi and Rizzi's \cite{panconesi01some} algorithm finds a maximal matching in $O(\Delta + \log^* k)$ rounds.

Linial's~\cite{linial92locality} result gives us the lower bound of $\Omega(\log^* k)$; however, so far it has not been known whether $\Omega(\Delta)$ rounds is required. Our result now settles this question. The maximum degree of a $k$-edge-coloured graph is at most $k$, and we have the following corollary.
\begin{corollary}\label{cor:lb}
    A deterministic distributed algorithm that finds a maximal matching in an anonymous edge-coloured graph of maximum degree $\Delta$ requires $\Omega(\Delta)$ communication rounds.
\end{corollary}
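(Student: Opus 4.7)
The plan is to derive Corollary~\ref{cor:lb} directly from Theorem~\ref{thm:lb2} by setting $k$ to be essentially $\Delta$. The pivotal observation is structural: if a graph is properly edge-coloured with $k$ colours, then all edges incident to any given vertex must have pairwise distinct colours, so its maximum degree is at most $k$. In particular, every $\Delta$-edge-coloured graph has maximum degree at most $\Delta$ and therefore lies in the input class of any algorithm that is supposed to handle anonymous edge-coloured graphs of maximum degree $\Delta$.

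With this in hand, the reduction is immediate. Let $A$ be any deterministic distributed algorithm that finds a maximal matching on every anonymous edge-coloured graph of maximum degree $\Delta$. Restricting the inputs of $A$ to graphs whose colouring happens to use the palette $\{1, 2, \ldots, \Delta\}$ yields an algorithm that, in particular, solves maximal matching on every $\Delta$-edge-coloured graph. Applying Theorem~\ref{thm:lb2} with $k = \Delta$, this restricted algorithm requires at least $k - 1 = \Delta - 1$ communication rounds in the worst case, and hence so does $A$ itself. Since $\Delta - 1 = \Omega(\Delta)$, the corollary follows.

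There is no real obstacle here beyond the structural observation above: all of the technical substance is already carried by Theorem~\ref{thm:lb2}. The one minor point worth flagging is that one must verify that the hard instance underlying Theorem~\ref{thm:lb2} can be viewed as a legitimate input for $A$, i.e., that it is a properly edge-coloured graph whose degree respects the bound. This is automatic from the definition of a $k$-edge-coloured graph, which, as noted, implicitly enforces maximum degree at most $k$.
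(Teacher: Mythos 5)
Your proposal is correct and matches the paper's own (very short) justification: the paper likewise observes that a $k$-edge-coloured graph has maximum degree at most $k$, so the $k-1$ round lower bound of Theorem~\ref{thm:lb2} transfers directly to algorithms for graphs of maximum degree $\Delta$, giving $\Omega(\Delta)$. Your explicit choice $k=\Delta$ and the check that the hard instances are legitimate inputs are exactly the (implicit) content of the paper's one-line argument.
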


Incidentally, our lower-bound construction is a $d$-regular graph with $d = k - 1$, and hence this work shows that we need $d$ rounds even in the seemingly simple case of $d$-regular graphs (assuming $d \ge 2$). Note that in a regular graph, an optimal fractional matching (edge packing) is trivial to find, and none of the existing lower bounds \cite{kuhn04what, kuhn06price, kuhn10local} apply---previously, we have not even had polylogarithmic-in-$\Delta$ lower bounds for such graphs.

Also note that if we study $d$-regular graphs with $d = k$, the problem becomes trivial: the edges of colour $1$ form a perfect matching and we can solve the problem in constant time. The case of $d = k - 1$ is the first non-trivial case, and it is already sufficiently rich to show that the greedy algorithm is optimal.

\subsection{Future Work}

Our lower-bound result covers the case of anonymous networks, including the widely-studied \emph{port-numbering model} \cite{angluin80local, yamashita96computing} and its weaker variants \cite{yamashita99leader} such as the \emph{broadcast model} \cite{astrand10vc-sc}. What remains open is the case of networks in which nodes have unique identifiers. Nevertheless, our result shows that in order to break the $\Omega(\Delta)$ barrier, an algorithm has to make an essential use of the unique node identifiers.

\section{Preliminaries}\label{sec:prelim}

In our lower-bound construction, we will need to manipulate edge-coloured trees, and certain group-theoretic concepts turn out to be useful.

\subsection{Group \texorpdfstring{$G_k$}{Gk}}

Throughout this text, $k$ is a positive integer. We use the shorthand notations \note{$X+x$\\$X-x$\\$[i]$\\$G_k$\\$e$}$X+x = X \cup \{x\}$ and $X-x = X \setminus \{x\}$ for a set $X$, and $[i] = \{1,2,\dotsc,i\}$ for an integer $i$.

We define the group $G_k = \langle 1, 2, \dotsc, k \mid 1^2, 2^2, \dotsc, k^2 \rangle$. That is, the generators of group $G_k$ are $1, 2, \dotsc, k$, and we have the relations $c^2 = cc = e$ for each $c \in [k]$; we use $e$ to denote the identity element, and we use the multiplicative notation $xy$ or $x\cdot y$ for the group operation. Group $G_k$ is the free product of $k$ cyclic groups of order two, a.k.a.\ the group generated by $k$ involutions, the universal Coxeter group, or the free Coxeter group.

Let \note{$\Gamma_k$}$\Gamma_k$ be the Cayley graph of $G_k$ with respect to the generators $[k]$; see Figure~\ref{fig:cayley} for an illustration. In $\Gamma_k$, we have a node for each element $x \in G_k$, and there is an edge of colour $c \in [k]$ from $x \in G_k$ to $y \in G_k$ if $y = xc$. As each generator is its own inverse, there is an edge of colour $c$ from $x$ to $y$ iff there is an edge of colour $c$ from $y$ to $x$; hence we can interpret $\Gamma_k$ as an undirected graph. It can be verified that $\Gamma_k$ is a $k$-regular $k$-edge-coloured tree; $\Gamma_k$ is countably infinite if $k \ge 2$.

\begin{figure}
    \centering
    \includegraphics[page=3]{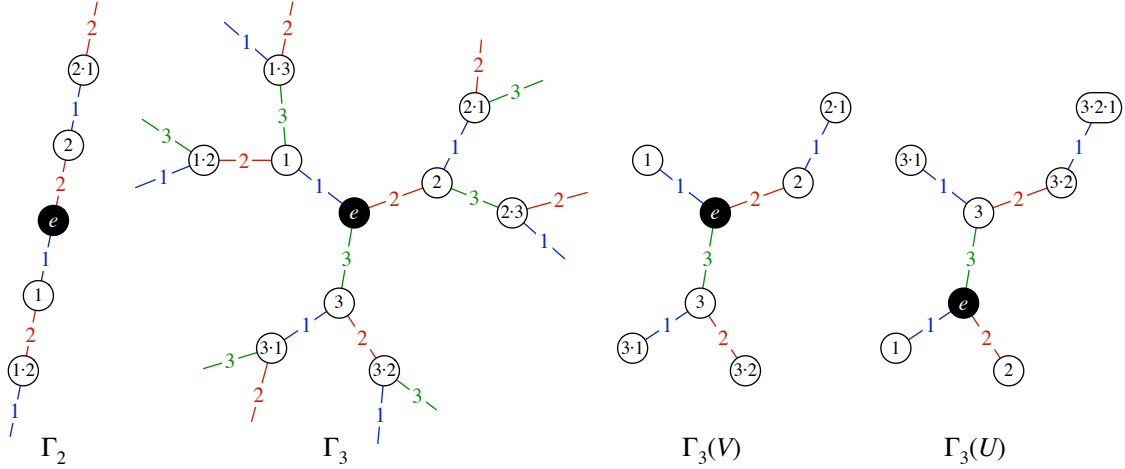} 
    \caption{In this example, $V = \{ e,\, 1,\, 2,\, 2{\cdot}1,\, 3,\, 3{\cdot}1,\, 3{\cdot}2 \} \subseteq G_3$ is a $3$-colour system and $U = "3V$. For example, $V[1] = U[1]$ and $V = V[2] \ne U[2] \ne U$.}\label{fig:cayley}
\end{figure}

In the reduced form, an element $x \in G_k$ is a product $x = c_1 c_2 \dotsm c_\ell$ such that $c_i \in [k]$ and $c_{i-1} \ne c_i$. The reduced form is unique; it corresponds to the sequence of edge colours along the unique path from $e$ to $x$ in $\Gamma_k$. We use the length of the path to define the norm \note{$|x|$\\$"x$\\$\tail$\\$\head$\\$\pred$}$|x| = \ell$.

We use the shorthand notation $"x = x^{-1}$ for the inverse of $x \in G_k$. If $x \in G_k-e$, there is a unique $c \in [k]$ such that $|xc| = |x|-1$; we say that $c$ is the \emph{tail} of $x$, in notation $\tail(x) = c$. We also define $\head(x) = \tail("x)$ and $\pred(x) = x \tail(x)$ for each $x \in G_k-e$.

We make the following observations: If $x,y \in G_k$, then $|"xy|$ is the length of the unique path from $x$ to $y$ in $\Gamma_k$; in particular, $d(x,y) = |"xy|$ defines a metric on $G_k$. If $|"xy| = 1$, nodes $x$ and $y$ are connected with an edge of colour $"xy$. We have $|"x| = |x|$ for all $x \in G_k$ and $|xy| \equiv |x| + |y| \mod 2$ for all $x,y \in G_k$. The equality $|xy| = |x| + |y|$ holds iff $x = e$, $y = e$, or $\tail(x) \ne \head(y)$.

If $V \subseteq G_k$ and $x \in G_k$, we define \note{$xV$\\$xf$}$xV = \{ xv : v \in V \}$. If $V \subseteq G_k$, $f\colon V \to X$, and $x \in G_k$, we also define the function $xf\colon xV \to X$ as follows: $(xf)(y) = f("xy)$ for each $y \in xV$. That is, $(xf)(xv) = f(v)$ for each $v \in V$.

\subsection{Colour Systems}

A non-empty set $V \subseteq G_k$ is a \emph{$k$-colour system} if $v \in V-e$ implies $\pred(v) \in V$. That is, a colour system is a prefix-closed subset; put otherwise, we can start from any $v \in V$ and walk towards $e$ in $\Gamma_k$ without leaving $V$. We define the set of edges \note{$E(V)$\\$\Gamma_k(V)$}$E(V) = \{ \{ \pred(v), v \} : v \in V-e \}$. Let $\Gamma_k(V)$ be the graph with the node set $V$ and the edge set $E(V)$. Now $\Gamma_k(V)$ is a connected subgraph of the tree $\Gamma_k$; see Figure~\ref{fig:cayley} for an example. Observe that if $\T$ is any $k$-edge-coloured tree, then we can construct a $k$-colour system $V \subseteq G_k$ such that $\T$ and $\Gamma_k(V)$ are isomorphic.

The following lemma is straightforward to verify.
\begin{lemma}
    If $V$ is a $k$-colour system and $u \in V$, then $"uV$ is a $k$-colour system. Moreover, $x \mapsto "ux$ is an isomorphism from $\Gamma_k(V)$ to $\Gamma_k("uV)$ that preserves adjacencies and edge colours.
\end{lemma}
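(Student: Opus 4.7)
The plan is to exploit the standard fact that in any Cayley graph, left multiplication by a fixed group element is an automorphism that additionally preserves the edge colouring (since colours are determined by right multiplication by generators). Concretely, define $\phi \colon G_k \to G_k$ by $\phi(x) = \bar{u}x$. Then $\phi$ is a bijection of $G_k$ with inverse $y \mapsto uy$, and for any $x,y \in G_k$ and $c \in [k]$, we have $y = xc$ iff $\bar{u}y = (\bar{u}x)c$. Hence $\{x,y\}$ is an edge of colour $c$ in $\Gamma_k$ iff $\{\phi(x),\phi(y)\}$ is, so $\phi$ is a colour-preserving automorphism of $\Gamma_k$.

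The next step is to show that $\bar{u}V$ is a $k$-colour system. The helpful auxiliary observation is that $\Gamma_k(V)$ coincides with the subgraph of $\Gamma_k$ induced by $V$: since $\Gamma_k$ is a tree, any edge $\{v_1,v_2\}$ of $\Gamma_k$ with both endpoints in $V$ and $|v_1|<|v_2|$ must satisfy $v_1 = \pred(v_2)$, so the induced edges are exactly those listed in $E(V)$. In particular, every $k$-colour system is a connected subset of $\Gamma_k$ containing $e$, and conversely every connected subset of $\Gamma_k$ containing $e$ is prefix-closed, since in the tree $\Gamma_k$ the unique path from any non-identity element $w$ to $e$ passes through $\pred(w)$. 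Now $V$ is connected and contains $e$; applying the automorphism $\phi$, its image $\bar{u}V = \phi(V)$ is also connected, and $e = \phi(u) \in \bar{u}V$ because $u \in V$. Therefore $\bar{u}V$ is prefix-closed, i.e.\ a $k$-colour system.

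The isomorphism claim then follows immediately: $\phi$ restricted to $V$ is a bijection onto $\bar{u}V$, and by the colour-preserving automorphism property it sends the subgraph of $\Gamma_k$ induced by $V$ to the one induced by $\bar{u}V$ with all edge colours preserved. Under the identification above, this is exactly an isomorphism from $\Gamma_k(V)$ to $\Gamma_k(\bar{u}V)$.

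I do not anticipate a real obstacle: the substance of the lemma is just that Cayley graphs are vertex-transitive in a colour-preserving way. The only mildly delicate step is the auxiliary identification of $\Gamma_k(V)$ with the $\Gamma_k$-induced subgraph on $V$, together with the characterisation of $k$-colour systems as the connected subsets of $\Gamma_k$ containing $e$; once these are in place, everything reduces to the one-line automorphism computation.
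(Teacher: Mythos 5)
Your proof is correct. The paper does not actually give an argument for this lemma (it is declared ``straightforward to verify''), and your route---left multiplication by $\bar{u}$ is a colour-preserving automorphism of the Cayley graph $\Gamma_k$, combined with the identification of $k$-colour systems with the connected subsets of $\Gamma_k$ containing $e$ and of $\Gamma_k(V)$ with the induced subgraph on $V$---is exactly the natural verification the authors leave to the reader; all the auxiliary facts you use (lengths of adjacent elements differ by one, uniqueness of tree paths through $\pred$) are consistent with the paper's preliminaries.
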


For a colour system $V$ and integer $h$, we define \note{$V[h]$\\$f[h]$}$V[h] = \{ v\in V: |v| \le h \}$. Similarly, if $f\colon V \to X$, we define that $f[h]\colon V[h] \to X$ is the restriction of $f$ to $V[h]$. Note that $V[h]$ is a colour system. The set $u(("uV)[h]) \subseteq V$ consists of all nodes that are within distance $h$ from $u \in V$ in $\Gamma_k(V)$.

Let \note{$C(V,v)$\\$\deg$}$C(V,v) = \{ "uv : \{u,v\} \in E(V) \}$ denote the set of colours incident to $v \in V$ in $\Gamma_k(V)$. Note that $C(V,v) = \{ c \in [k] : vc \in V \} = ("vV)[1] - e$. The degree of $v$ is $\deg(V,v) = |C(V,v)|$, and colour system $V$ is said to be $d$-regular if $\deg(V,v) = d$ for all $v \in V$.

If $V$ is a colour system and $c \in C(V,e)$, we define \note{$\prune$}$\prune(V,c) = \{ v \in V-e : \head(v) \ne c \} + e$. Observe that $U = \prune(V,c)$ is a colour system. Moreover, if $V$ is $d$-regular, then $\deg(U,u) = d$ for all $u \in U-e$ and $\deg(U,e) = d-1$.

\subsection{Distributed Algorithms}

For the purposes of our lower-bound result, it is sufficient to define formally what a distributed algorithm $A$ outputs if we apply it in  $\Gamma_k(V)$, where $V$ is a colour system.

We already gave an informal definition of a distributed algorithm in Section~\ref{ssec:gmm}. In particular, we assumed that the nodes are anonymous (they do not have unique identifiers), and initially each node knows the colours of the incident edges. Put otherwise, initially a node $v \in V$ knows precisely $("vV)[1]$. Now if we let the nodes exchange all information that they have, after the first round each node $v \in V$ can reconstruct $("vV)[2]$, and recursively, after $r$ rounds each node knows precisely $("vV)[r+1]$. We will use this as our definition of a distributed algorithm.

Assume that $A$ is a function that associates a \emph{local output} $A(V,v)$ with any colour system $V$ and node $v \in V$. Then we say that $A$ is a \emph{distributed algorithm with running time $r$} if $("uU)[r+1] = ("vV)[r+1]$ implies $A(U,u) = A(V,v)$.

\subsection{Algorithms for Maximal Matchings}

\begin{wrapfigure}{r}{50mm}
    \raggedleft
    \vspace{-\intextsep}%
    \includegraphics[page=4]{figs.pdf}
    \vspace{-\intextsep}
\end{wrapfigure}
We say that a distributed algorithm $A$ \emph{finds a maximal matching} in colour system $V$ if
\begin{enumerate}[noitemsep,label=(M\arabic*)]
    \item\label{M1} we have $A(V,v) \in C(V,v) + \bot$ for each $v \in V$,
    \item\label{M2} if $A(V,v) = c \ne \bot$ then $vc \in V$ and $A(V,vc) = c$,
    \item\label{M3} if $A(V,v) = \bot$ and $c \in C(V,v)$ then $A(V,vc) \ne \bot$.
\end{enumerate}
The interpretation is that $A(V,v) = \bot$ if $v$ is unmatched and $A(V,v) = c \in C(v)$ if $v$ is matched along the edge of colour $c$. Property~\ref{M2} ensures that the output is consistent, and property~\ref{M3} ensures that the matching is maximal.

\section{Lower Bound}\label{sec:lb}

Let us first cover the case of $k \le 2$.

\begin{lemma}\label{lem:lbtriv}
    Let $k \le 2$ be a positive integer. A deterministic distributed algorithm that finds a maximal matching in any anonymous, $k$-edge-coloured graphs requires at least $k-1$ communication rounds.
\end{lemma}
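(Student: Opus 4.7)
The case $k=1$ is vacuous: every algorithm runs for at least $k-1=0$ rounds. For $k=2$, the plan is to assume some algorithm $A$ finds a maximal matching in $0$ rounds and derive a contradiction from \ref{M1}--\ref{M3}. A $0$-round algorithm has access only to $(\bar v V)[1]$, which in an anonymous graph carries no more information than the set $C(V,v) \subseteq [2]$ of incident edge colours; hence there is a function $f$ with $A(V,v) = f(C(V,v))$, and \ref{M1} gives $f(S) \in S + \bot$ for every $S \subseteq [2]$.

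The next step is to pin down $f$ on singletons using two single-edge colour systems, one for each colour. Take $V = \{e, c\}$, so that $\Gamma_2(V)$ is a single edge of colour $c \in \{1,2\}$; both endpoints see $\{c\}$ and therefore both output $f(\{c\})$. If $f(\{c\}) = \bot$, both endpoints of the edge are unmatched, and \ref{M3} is violated. Hence $f(\{1\}) = 1$ and $f(\{2\}) = 2$.

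Finally I would apply these values to the three-node path $V = \{e,\, 2,\, 2{\cdot}1\}$, i.e.\ a path $u$--$v$--$w$ whose first edge has colour $2$ and second edge has colour $1$. The endpoints $u$ and $w$ see $\{2\}$ and $\{1\}$, so they output $2$ and $1$ respectively. Property \ref{M2} applied to $u$ forces the middle node $v$ (the colour-$2$ neighbour of $u$) to output $2$, while \ref{M2} applied to $w$ forces the same node $v$ to output $1$. These demands are incompatible, which gives the contradiction.

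I do not anticipate any real obstacle: the entire argument is a short finite check driven by the matching axioms, and the only point that deserves explicit justification is the claim that at distance $1$ an anonymous node effectively sees just its incident colour set, which is immediate from the definition of a distributed algorithm recalled earlier in the paper.
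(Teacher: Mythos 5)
Your proposal is correct and follows essentially the same route as the paper: the paper also pins down the output on a single edge of each colour (systems $\{e,1\}$ and $\{e,2\}$) and then derives the contradiction on the three-node path with one edge of each colour, using radius-$1$ indistinguishability of the leaves. Your explicit reformulation via a function $f$ of the incident colour set and the use of \ref{M2} at both endpoints is just a slightly more spelled-out version of the same argument.
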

\begin{proof}
    The case of $k = 1$ is trivial. Let us then focus on the case of $k = 2$. Define the $2$-colour systems $T = \{e,1\}$, $U = \{e,2\}$, and $V = \{e,1,2\}$. Now $A(T,1) = 1$ and $A(U,2) = 2$ for any distributed algorithm~$A$. However, we must have either $A(V,1) \ne 1$ or $A(V,2) \ne 2$, even though $("1T)[1] = ("1V)[1]$ and $("2U)[1] = ("2V)[1]$.
\end{proof}

The rest of this work contains the proof of the following theorem that covers the case of $k \ge 3$.
\begin{theorem}\label{thm:lb}
    Let \note{$k$\\$d$\\$A$}$k \ge 3$ be an integer, and let $d = k - 1$. Assume that $A$ is a distributed algorithm that finds a maximal matching in any $d$-regular $k$-colour system. Then there are two $d$-regular $k$-colour systems $U$ and $V$ such that $U[d] = V[d]$, $A(U,e) \ne \bot$, and $A(V,e) = \bot$.
\end{theorem}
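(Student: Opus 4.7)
The plan is to proceed by contradiction: suppose no such pair $U$, $V$ exists, so that for every pair of $d$-regular $k$-colour systems $U$, $V$ with $U[d] = V[d]$, $A$ outputs the same matched/unmatched status at $e$. Equivalently, the value of the predicate ``$A(V,e) \ne \bot$'' is determined by $V[d]$ alone. I would then construct an explicit common radius-$d$ view $W \subseteq G_k$ and exhibit two $d$-regular completions of $W$ whose forced statuses at $e$ are different, yielding the contradiction.

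For the common view, I would take the prefix-closed set $W$ whose vertices of depth strictly less than $d$ have degree $d$ and whose depth-$d$ vertices are leaves, embedding a distinguished path $e,\ c_1,\ c_1 c_2,\ \ldots,\ c_1 c_2 \cdots c_d$ of length $d$ chosen as a worst case for greedy in the style of the example from Section~\ref{ssec:gmm}. This fixes $U[d] = V[d] = W$, and the completions $U$ and $V$ then differ only in the $d$-regular infinite subtrees attached beyond depth $d$ at each boundary vertex of $W$. The remaining freedom at each such boundary vertex is a one-parameter choice of \emph{missing colour}, since $d = k - 1$ forces exactly one colour to be absent at every vertex of a $d$-regular $k$-colour system.

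The core of the proof is to show that this boundary freedom genuinely propagates back and flips the matched/unmatched status at $e$ for any correct maximal matching algorithm, not merely the greedy one. I would attempt this by induction on the distance $j$ from $e$ to the leaves of $W$: the claim at level $j$ is that a pair of completions agreeing within radius $j$ of $e$ can be chosen to force opposite matched/unmatched statuses at some fixed vertex at distance $d - j$ from $e$. The inductive step uses the translation isomorphism $x \mapsto "ux$ and the $\prune$ operation from the preliminaries to re-centre the argument one step closer to $e$, while the one-missing-colour structure supplies exactly the freedom to realign configurations at a deeper layer without disturbing the shared view closer to $e$. The main obstacle is precisely this inductive step: one must verify that $A$'s consistency condition~\ref{M2} and maximality condition~\ref{M3} preclude the two completions from producing the same status at $e$, which I expect to follow from a careful case analysis of how $A$'s matching edges must arrange themselves along the distinguished path in each completion.
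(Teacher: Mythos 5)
There is a genuine gap, and it sits exactly where you place your ``main obstacle'': the inductive step is not a routine case analysis, and the overall setup makes it unlikely to go through. First, you fix the common view $W$ in advance as a ball containing a worst-case path \emph{for greedy}. But the theorem quantifies over an arbitrary algorithm $A$ satisfying \ref{M1}--\ref{M3}, and such an $A$ need not resemble greedy in any way; no instance chosen independently of $A$ can be expected to fool it. In the paper the hard instances are built \emph{adaptively}: already at radius $0$ one needs Lemma~\ref{lem:0temp} (the fixed-point-free map $c \mapsto A(Z,\hat c,e)$ admits colours $c_1,c_2,c_3$ with the stated behaviour) to get started, and at every later step the colour pickers $P,Q$ of Section~\ref{ssec:ind} are defined in terms of the outputs $A(T_h,\tau_h,t)$ themselves. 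Your plan has no analogue of this adaptivity.

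Second, the mechanism that actually flips the status is missing. Knowing that two completions agree up to radius $j$ tells you nothing, by itself, about how $A$ behaves at depth $d-j$; the freedom of one missing colour per boundary vertex does not propagate back through conditions \ref{M2}--\ref{M3} along a single distinguished path, because $A$ sees entire radius-$r$ balls and may coordinate its matching arbitrarily off that path. The paper needs two further ingredients that your sketch lacks: (i) the symmetry of extensions (Lemma~\ref{lem:extsym}, Corollary~\ref{cor:realequiv}), which guarantees that $A$ gives the same output on all nodes of a fibre, so that the behaviour of $A$ on an infinite $d$-regular instance can be described by finitely many equivalence classes; and (ii) the parity argument of Lemma~\ref{lem:critnode}: two instances on which $A$ produces perfect matchings are pruned and glued along one edge so that the near regions have even and odd cardinality respectively, which forces an unmatched node $y$ to exist somewhere (not at a prescribed distance from $e$), and the whole construction is then re-centred at $y$ with the radius of agreement increased by one. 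Your inductive invariant (opposite statuses at a fixed vertex at distance $d-j$) is both too weak to support such a re-centring and not something the one-missing-colour freedom can deliver by local reasoning. Without an argument of this kind, the step you identify as the main obstacle is essentially the theorem itself.
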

\noindent In particular, the running time of $A$ is at least $d = k - 1$. Theorem~\ref{thm:lb2} follows.

\subsection{Overview of the Proof}

For the rest of this work, choose $k$, $d$, and $A$ as in the statement of Theorem~\ref{thm:lb}, and let $r$ be the running time of $A$. All colour systems are $k$-colour systems.

Sections \ref{ssec:templ}--\ref{ssec:crit} introduce a number of concepts that we will use to present our lower-bound construction. After that, we prove Theorem~\ref{thm:lb} by induction; the base case is in Section~\ref{ssec:base}, and the inductive step in Section~\ref{ssec:ind}.

\subsection{Templates and Colour Pickers}\label{ssec:templ}

An \emph{$h$-template} is a pair $(T,\tau)$ where $T \subseteq G_k$ is an $h$-regular colour system and $\tau \colon T \to [k]$ associates a \emph{forbidden colour} $\tau(t) \in [k] \setminus C(T,t)$ with each $t \in T$. The set of \emph{free colours} is \note{$F(T,\tau,t)$} $F(T,\tau,t) = [k] \setminus (C(T,t) + \tau(t))$ for each $t \in T$.

\begin{wrapfigure}{r}{75mm}
    \raggedleft
    \includegraphics[page=5]{figs.pdf} 
\end{wrapfigure}
Let $b$ be an integer with $0 \le b \le d-h$. A \emph{$b$-colour picker} for $(T,\tau)$ is a function $P$ that associates a subset $P(t) \subseteq F(T,\tau,t)$ of size $b$ with each node $t \in T$. That is, a $b$-colour picker chooses $b$ free colours for each node. The figure on the right gives an example with $h=2$, $b=1$, $d=4$, and $k=5$; a $2$-template is an infinite path and a $1$-colour picker chooses exactly one free colour for each node.

Let $P$ and $Q$ be colour pickers for $(T,\tau)$. We say that $P$ and $Q$ are \emph{disjoint} if $P(t) \cap Q(t) = \emptyset$ for all $t \in T$. If $P$ and $Q$ are disjoint colour pickers for $(T,\tau)$, we can construct a colour picker $R$ by setting $R(t) = P(t) \cup Q(t)$ for each $t \in T$.

\subsection{Extensions}

Let $(T,\tau)$ be an $h$-template and let $P$ be a $b$-colour picker for $(T,\tau)$. We will define a relation~$\leadsto$ between $G_k$ and $T$ recursively as follows; see Figure~\ref{fig:ext} for an illustration.
\begin{enumerate}[label=(\roman*)]
    \item We have $e \leadsto e$, $c \leadsto c$ for each $c \in C(T,e)$, and $c \leadsto e$ for each $c \in P(e)$.
    \item Assume that $x \leadsto t$ and $x \ne e$. \\
          We have $xc \leadsto tc$ for each $c \in C(T,t) - \tail(x)$,
          and $xc \leadsto t$ for each $c \in P(t) - \tail(x)$.
\end{enumerate}
We make the following observations.
\begin{enumerate}[noitemsep]
    \item If $x \leadsto t_1$ and $x \leadsto t_2$, we have $t_1 = t_2$.
    \item If $x \leadsto t$ and $x \ne e$, we have $\tail(x) \in C(T,t) \cup P(t)$.
    \item If $x \leadsto t$, $x \ne e$, and $\tail(x) \in C(T,t)$, we have $\pred(x) \leadsto t\tail(x)$.
    \item If $x \leadsto t$, $x \ne e$, and $\tail(x) \in P(t)$, we have $\pred(x) \leadsto t$.
    \item If $x \leadsto t$ and $c \in C(T,t)$, we have $xc \leadsto tc$.
    \item If $x \leadsto t$ and $c \in P(t)$, we have $xc \leadsto t$.
    \item If $x \leadsto t$ and $c \in [k] \setminus (C(T,t) \cup P(t))$, there is no $t' \in T$ with $xc \leadsto t'$.
    \item If $x \leadsto t$ then $|x| \ge |t|$.
    \item For each $t \in T$ there exists an $x$ such that $x \leadsto t$.
\end{enumerate}
\begin{figure}
    \centering
    \includegraphics[page=6]{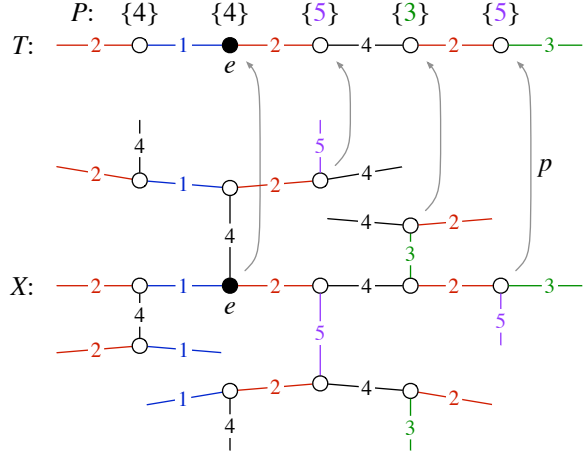} 
    \caption{Here $T$ is a $2$-template and $P$ is a $1$-colour picker. The arrows illustrate the relation~$\leadsto$ between $X$ and $T$, and hence also function $p$. In this case, $X$ is a $3$-regular colour system.}\label{fig:ext}
\end{figure}
Let $X = \{ x \in G_k : x \leadsto t \text{ for some } t \in T \}$. Define the function $p\colon X \to T$ as follows: for each $x \in X$, let $p(x)$ be the unique element with $x \leadsto p(x)$. Let $\xi = \tau \circ p$. We say that $(X,\xi,p)$ is the \emph{$P$-extension} of $(T,\tau)$, in notation, \note{$\ext$}$\ext(T,\tau,P) = (X,\xi,p)$.

\begin{remark}
    We can interpret extensions as universal covering graphs~\cite{angluin80local} as follows. First, consider the edge-coloured tree $\G = \Gamma_k(T)$. Then modify $\G$ as follows: for each $t \in T$ and $c \in P(t)$, add a self-loop of colour $c$ from $t$ to itself. Now $\G$ is an edge-coloured multigraph; then we construct the universal covering graph $\T$ of $\G$ (i.e., we ``unfold'' all self-loops of $\G$). Graph $\T$ is an edge-coloured tree; it can be verified that $\T$ is isomorphic to $\Gamma_k(X)$.
\end{remark}

\subsection{Properties of Extensions}

Let us first prove that an extension is a template.

\begin{lemma}\label{lem:ext}
    Assume that $(T,\tau)$ is an $h$-template, $P$ is a $b$-colour picker for $(T,\tau)$, and $(X,\xi,p) = \ext(T,\tau,P)$. Then $X$ is an $(h+b)$-regular colour system, and $(X,\xi)$ is an $(h+b)$-template. For each $x \in X$ we have $C(X,x) = C(T,p(x)) \cup P(p(x))$.
\end{lemma}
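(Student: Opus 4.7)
The plan is to verify the three assertions in order, leveraging the nine observations listed just before the statement; essentially all of the work has already been packaged into those observations, so the proof reduces to bookkeeping.

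First I would check that $X$ is a colour system. Nonemptiness is immediate from $e \leadsto e$. For prefix-closure, take any $x \in X - e$ and let $t = p(x)$. By observation (ii), $\tail(x) \in C(T,t) \cup P(t)$, and then observations (iii) and (iv) produce some $t' \in T$ with $\pred(x) \leadsto t'$, giving $\pred(x) \in X$. So $X$ is a colour system.

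Next I would establish the key identity $C(X,x) = C(T,p(x)) \cup P(p(x))$ for each $x \in X$. Recall that $C(X,x) = \{c \in [k] : xc \in X\}$. For the inclusion $\supseteq$, observations (v) and (vi) directly give $xc \leadsto p(x)c$ when $c \in C(T,p(x))$ (note $p(x)c \in T$ by the definition of $C$) and $xc \leadsto p(x)$ when $c \in P(p(x))$; in either case $xc \in X$. For the converse inclusion, suppose $c \in [k] \setminus (C(T,p(x)) \cup P(p(x)))$; then observation (vii) says there is no $t' \in T$ with $xc \leadsto t'$, so $xc \notin X$.

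The regularity claim then follows from a disjointness observation: by the definition of a colour picker, $P(p(x)) \subseteq F(T,\tau,p(x)) \subseteq [k] \setminus C(T,p(x))$, so the union $C(T,p(x)) \cup P(p(x))$ is disjoint and has size $h + b$. Hence $\deg(X,x) = h + b$ for every $x \in X$, so $X$ is $(h+b)$-regular. Finally, for the template claim I need $\xi(x) = \tau(p(x)) \in [k] \setminus C(X,x)$. By assumption $\tau(p(x)) \notin C(T,p(x))$, and $\tau(p(x)) \notin F(T,\tau,p(x))$ by definition of the free colours, hence $\tau(p(x)) \notin P(p(x))$. Combining, $\tau(p(x)) \notin C(T,p(x)) \cup P(p(x)) = C(X,x)$, so $(X,\xi)$ is an $(h+b)$-template.

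There is no real obstacle here since the construction of $\leadsto$ was designed to make exactly these properties true; the only subtlety is ensuring each use of observations (v), (vi), (vii) is applied with the correct hypothesis, in particular that the disjointness $C(T,t) \cap P(t) = \emptyset$ (coming from $P(t) \subseteq F(T,\tau,t)$) is what simultaneously yields both the correct degree count and the validity of the forbidden colour.
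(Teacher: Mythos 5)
Your proposal is correct and follows essentially the same route as the paper's own (very terse) proof: prefix-closure of $X$, the identity $C(X,x)=C(T,p(x))\cup P(p(x))$, the degree count via disjointness of $C(T,t)$ and $P(t)\subseteq F(T,\tau,t)$, and the validity of the forbidden colour $\xi(x)=\tau(p(x))$. You merely spell out explicitly the appeals to the observations about $\leadsto$ that the paper leaves implicit.
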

\begin{proof}
    Each $x \in X-e$ has $\pred(x) \in X$; hence $X$ is a colour system. If $x \in X$ and $c \in [k]$, we have $xc \in X$ iff $c \in C(T,p(x)) \cup P(p(x))$; hence $C(X,x) = C(T,p(x)) \cup P(p(x))$ and $\deg(x) = h+b$. It follows that $X$ is $(h+b)$-regular. By assumption, $\xi(x) = \tau(p(x)) \notin C(T,p(x)) \cup P(p(x)) = C(X,x)$; that is, $\xi$ associates a valid forbidden colour with each $x \in X$, and we conclude that $(X,\xi)$ is an $(h+b)$-template.
\end{proof}

Next, we observe that an extension has a high degree of symmetry.

\begin{lemma}\label{lem:extsym}
    Let $(X,\xi,p) = \ext(T,\tau,P)$, $x,y \in X$, and $p(x) = p(y)$. Then $"xX = "yX$, $"x\xi = "y\xi$, and $"xp = "yp$.
\end{lemma}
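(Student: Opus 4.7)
The plan is to unfold the definitions of $\bar xX$, $\bar xp$ and $\bar x\xi$ so the lemma reduces to a single combined claim: for every $z \in G_k$, $xz \in X$ iff $yz \in X$, and whenever both lie in $X$, $p(xz) = p(yz)$. Granting this, $\bar xX = \bar yX$ is immediate, $(\bar xp)(z) = p(xz) = p(yz) = (\bar yp)(z)$ gives $\bar xp = \bar yp$, and $\xi = \tau \circ p$ then yields $\bar x\xi = \bar y\xi$.

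The key ingredient is that a one-step extension is determined by the image under $p$. Specifically, the three observations stated just before Lemma~\ref{lem:ext}---$xc \leadsto tc$ when $c \in C(T,t)$, $xc \leadsto t$ when $c \in P(t)$, and no $t'$ with $xc \leadsto t'$ otherwise---combined with Lemma~\ref{lem:ext} give the following. For any $u \in X$ with $p(u) = s$ and any $c \in [k]$: the element $uc$ lies in $X$ iff $c \in C(T,s) \cup P(s)$, and in that case $p(uc) = sc$ if $c \in C(T,s)$, while $p(uc) = s$ if $c \in P(s)$. Both the membership test and the value of $p(uc)$ depend only on $s$ and $c$, not on $u$ itself, so extensions at $u$ and at $v$ can be compared directly whenever $p(u) = p(v)$.

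With that in hand, I would induct on the reduced-form length $|z|$. The base case $z = e$ is the hypothesis $p(x) = p(y)$. For the step, write $z = z' c_\ell$ in reduced form with $|z'| = |z| - 1$ and apply the induction hypothesis to $z'$. If $xz', yz' \in X$ with $p(xz') = p(yz')$, the key ingredient applied with $u = xz'$ and separately $u = yz'$ and $c = c_\ell$ delivers the claim for $z$ at once. If instead $xz', yz' \notin X$, I would argue that $xz \notin X$ (and symmetrically $yz \notin X$), so the claim for $z$ holds vacuously.

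The one non-routine step is the implication $xz' \notin X \Rightarrow xz \notin X$. Here I would use that $\Gamma_k(X)$ is the induced subgraph of the tree $\Gamma_k$ on $X$ and is connected, hence itself a subtree of $\Gamma_k$. Since $d(x, xz) = |z|$, the sequence $x, xc_1, xc_1 c_2, \dotsc, xz$ is the unique geodesic in $\Gamma_k$ from $x$ to $xz$ and therefore passes through $xz'$. If $xz$ lay in $X$, connectedness of the subtree $\Gamma_k(X)$ would force every vertex of this geodesic to lie in $X$, contradicting $xz' \notin X$. I expect this geodesic observation to be the only step requiring thought; everything else is straightforward bookkeeping with the $\leadsto$ relation.
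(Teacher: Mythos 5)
Your proposal is correct and takes essentially the same approach as the paper: both arguments induct letter-by-letter along the reduced form of the displacement word, using that the one-step behaviour of the extension (whether $uc \in X$, and the value of $p(uc)$) depends only on $p(u)$ and the colour $c$, i.e., on the listed properties of $\leadsto$ together with Lemma~\ref{lem:ext}. The only difference is organisational: the paper proves the inclusion for $w \in \bar{x}X$ (prefix-closedness of the colour system $\bar{x}X$ keeps every prefix inside $X$) and obtains the converse by swapping $x$ and $y$, whereas you prove the biconditional for arbitrary $z \in G_k$ directly, which is what forces your extra tree-connectivity argument for the case $xz' \notin X$.
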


\begin{proof}
    Let $w \in "xX$. Assume that $w = c_1 c_2 \dotsm c_\ell$, where $c_i \in [k]$, and define $w_i = c_1 c_2 \dotsm c_i$. We have $w_i \in "xX$ and $xw_i \in x"xX = X$ for all $i$; let $t_i = p(xw_i)$.
    
    With these definitions, $xw_i \leadsto t_i$ for all $i = 0, 1, \dotsc, \ell$. We will prove by induction that $yw_i \leadsto t_i$ for all $i$. The base case of $i = 0$ is trivial. Now assume that $xw_i \leadsto t_i$ and $yw_i \leadsto t_i$. As we have $xw_ic_{i+1} \leadsto t_{i+1}$, there are two possibilities. If $c_{i+1} \in C(T,t_i)$, then $t_{i+1} = t_ic_{i+1}$ and $yw_ic_{i+1} \leadsto t_ic_{i+1}$. Otherwise $c_{i+1} \in P(t_i)$, $t_{i+1} = t_i$ and $yw_ic_{i+1} \leadsto t_i$. In both cases $yw_{i+1} \leadsto t_{i+1}$.
    
    It follows that $yw \leadsto t_\ell$, and hence $yw \in X$ with $p(yw) = t_\ell = p(xw)$. We have shown that $w \in "xX$ implies $w = "yyw \in "yX$ and $("yp)(w) = ("yp)("yyw) = p(yw) = p(xw) = ("xp)(w)$. By symmetry, $w \in "yY$ implies $w \in "xX$. Finally, $"xp = "yp$ implies $"x\xi = "y\xi$.
\end{proof}

We also show that the order in which we extend does not affect the end result. If we have two disjoint colour pickers $P$ and $Q$, we can first apply $P$ and then $Q$, or vice versa, and we obtain the same result as if we used the colour picker $t \mapsto P(t) \cup Q(t)$ directly; in this sense, extensions commute.
\begin{center}
    \includegraphics[page=7]{figs.pdf} 
\end{center}

\begin{lemma}\label{lem:extorder}
    Assume that $(T,\tau)$ is a template and $P$ and $Q$ are disjoint colour pickers for $(T,\tau)$. Let $R(t) = P(t) \cup Q(t)$ for each $t \in T$. Let $(K,\kappa,p) = \ext(T,\tau,P)$, $(L,\lambda,q) = \ext(K,\kappa,Q \circ p)$, and $(X,\xi,r) = \ext(T,\tau,R)$. Now $X = L$, $\lambda = \xi$, and $p \circ q = r$.
\end{lemma}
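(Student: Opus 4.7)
The plan is to establish a single biconditional that immediately yields all three conclusions $X = L$, $\lambda = \xi$, and $p \circ q = r$. Writing $\leadsto_R$, $\leadsto_P$, $\leadsto_Q$ for the $\leadsto$ relations of the three extensions $\ext(T,\tau,R)$, $\ext(T,\tau,P)$, $\ext(K,\kappa,Q \circ p)$ respectively, I would prove by strong induction on $|w|$ that for every $w \in G_k$ and $t \in T$,
\[
    w \leadsto_R t \quad\Longleftrightarrow\quad \exists\, y \in K :\ w \leadsto_Q y \text{ and } y \leadsto_P t.
\]
From this the lemma follows in one line: quantifying $t$ over $T$ on both sides gives $X = L$; Observation~1 (uniqueness of the $\leadsto$-image) forces $r(w) = p(q(w))$; and $\lambda(w) = \kappa(q(w)) = \tau(p(q(w))) = \tau(r(w)) = \xi(w)$.

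The base case $w = e$ is immediate, since every extension has $e \leadsto e$ and uniqueness forces the witnesses to be $y = e$, $t = e$. For the inductive step I would write $w = xc$ in reduced form and case-split on how $c$ enters the picture. Lemma~\ref{lem:ext} is central: it says $C(K, y') = C(T, p(y')) \cup P(p(y'))$, so an application of rule~(i) in the $Q$-extension corresponds to either rule~(i) in the $P$-extension (when $c \in C(T, p(y'))$, a genuine $T$-edge on both sides) or rule~(ii) in the $P$-extension (when $c \in P(p(y'))$, a $P$-self-loop absorbed by the $P$-extension). The only remaining possibility, $c \in Q(p(y')) = (Q \circ p)(y')$, is an application of rule~(ii) in the $Q$-extension that leaves the $P$-witness $y$ unchanged, matching rule~(ii) in the $R$-extension for the $Q$-component of $R(p(y'))$. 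In every sub-case the rule for $\leadsto_R$ at colour $c$ matches exactly the composition of the rules for $\leadsto_Q$ and $\leadsto_P$.

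The main technical subtlety is the degenerate case $\tail(y') = c$, where the rules for $\leadsto_P$ and $\leadsto_Q$ cannot be applied literally (they require $c \neq \tail(y')$). Since $c^2 = e$ in $G_k$, however, we have $y'c = \pred(y')$, and Observations~3 and~4 describe exactly how the $\leadsto$-image of $\pred(y')$ is determined by that of $y'$; equivalently, Observations~5 and~6 allow one to ``apply colour $c$ at $y'$'' uniformly, with $y'c$ silently collapsing to $\pred(y')$ in the degenerate case. Once this is recognised, both directions of the iff close by a one-line rule-matching for each of the three sub-cases.

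I expect the difficulty to be bookkeeping rather than any new idea: tracking which of the three $\leadsto$ relations is in play at each step, invoking Lemma~\ref{lem:ext} to rewrite $C(K, y')$ when translating between the $Q$- and $P$-extensions, and handling the $x = e$ and $y' = e$ boundary cases of the induction. With these in hand the induction closes and the three claimed equalities drop out simultaneously.
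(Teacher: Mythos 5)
Your proposal is correct and takes essentially the same route as the paper: an induction over the letters of the word with the same case split ($c$ a tree colour of $T$, a $P$-colour, a $Q$-colour, or none) and with Lemma~\ref{lem:ext} used to identify $C(K,y)=C(T,p(y))\cup P(p(y))$; phrasing it as a biconditional with an existential witness and peeling $\tail(w)$ instead of extending prefixes is only a cosmetic difference. Your treatment of the collapse $y'c=\pred(y')$ via Observations~3--6 is sound (note the recursive rule only excludes the tail of the \emph{source}, so only the $y'\leadsto p(y')$ step of the $P$-relation actually needs this workaround, while the $Q$-step is already literal since $w=xc$ is reduced), and this is exactly the point the paper's own proof passes over silently.
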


\begin{proof}
    Let $x = c_1 c_2 \dotsm c_\ell$, where $c_i \in [k]$, and define $x_i = c_1 c_2 \dotsm c_i$. We prove by induction that if $x_i \in X$, we also have $x_i \in L$ with $p(q(x_i)) = r(x_i)$, and if $x_i \notin X$, we also have $x_i \notin L$.
    
    The base case $i = 0$ is trivial: $p(q(e)) = p(e) = e = r(e)$ and $e \in X \cap L$. Now assume that $x_i \in X \cap L$ and $p(q(x_i)) = r(x_i)$. There are four cases depending on $c_{i+1}$:
    \begin{enumerate}
        \item Assume that $c_{i+1} \in C(T,r(x_i)) = C(T,p(q(x_i)))$. Then $c_{i+1} \in C(K,q(x_i))$, $x_{i+1} \in X \cap L$, and $p(q(x_{i+1})) = p(q(x_i c_{i+1})) = p(q(x_i) c_{i+1}) = p(q(x_i)) c_{i+1} = r(x_i) c_{i+1} = r(x_i c_{i+1}) = r(x_{i+1})$.
        \item Assume that $c_{i+1} \in P(r(x_i)) = P(p(q(x_i))) \subseteq R(r(x_i))$. Then $c_{i+1} \in C(K,q(x_i))$, $x_{i+1} \in X \cap L$, and $p(q(x_{i+1})) = p(q(x_i c_{i+1})) = p(q(x_i) c_{i+1}) = p(q(x_i)) = r(x_i) = r(x_i c_{i+1}) = r(x_{i+1})$.
        \item Assume that $c_{i+1} \in Q(r(x_i)) = Q(p(q(x_i))) \subseteq R(r(x_i))$. Then $c_{i+1} \in (Q \circ p)(q(x_i))$, $x_{i+1} \in X \cap L$, and $p(q(x_{i+1})) = p(q(x_i)) = r(x_i) = r(x_i c_{i+1}) = r(x_{i+1})$.
        \item Otherwise $x_{i+1} \notin X$ and $x_{i+1} \notin L$. As a consequence, $x_{i+j} \notin X$ and $x_{i+j} \notin L$ for all $j > 1$.
    \end{enumerate}
    In conclusion, we have $X = L$, $p \circ q = r$, and $\lambda = \tau \circ p \circ q = \tau \circ r = \xi$.
\end{proof}

\subsection{Realisations}

Let $(T,\tau)$ be an $h$-template. Define a $(d-h)$-colour picker $P$ by setting $P(t) = F(T,\tau,t)$ for each $t \in T$. Let $(V,g,p) = \ext(T,\tau,P)$. We say that $(V,p)$ is the \emph{realisation} of template $(T,\tau)$, in notation, \note{$\real$}$(V,p) = \real(T,\tau)$.

Intuitively, $V$ is a concrete problem instance---it is always a $d$-regular colour system, and hence we can apply algorithm $A$ to $V$. Templates can be seen as compact, schematic representations of problem instances.

Lemma~\ref{lem:extsym} has the following corollary.

\begin{corollary}\label{cor:realequiv}
    Let $(V,p) = \real(T,\tau)$. If $u, v \in V$ and $p(u) = p(v)$, then $"uV = "vV$. In particular, $A(V,u) = A(V,v)$.
\end{corollary}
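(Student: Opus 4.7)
The plan is to derive this corollary essentially as an immediate consequence of Lemma~\ref{lem:extsym}, with one extra step at the end to pass from equality of neighbourhoods to equality of outputs. By the definition of realisation in Section~\ref{ssec:templ}, we have $(V, g, p) = \ext(T, \tau, P)$ where $P$ is the $(d-h)$-colour picker defined by $P(t) = F(T,\tau,t)$. Thus $\real(T,\tau)$ is a particular instance of the extension construction, and every property proven for extensions applies.

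The first assertion $"uV = "vV$ is then just the first conclusion of Lemma~\ref{lem:extsym} applied to this specific extension: the hypothesis $p(u) = p(v)$ is exactly what Lemma~\ref{lem:extsym} requires, and its conclusion gives $"uV = "vV$ directly (we don't even need the other conclusions about $"u\xi = "v\xi$ and $"up = "vp$ here).

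For the second assertion, I would invoke the definition of a distributed algorithm from Section~\ref{sec:prelim}: since $A$ has running time $r$, the criterion $("uU)[r+1] = ("vV)[r+1] \implies A(U,u) = A(V,v)$ applies. Taking $U = V$ and using $"uV = "vV$ from the previous step, we get $("uV)[r+1] = ("vV)[r+1]$, and hence $A(V,u) = A(V,v)$.

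There is no real obstacle here; the work has all been done in Lemma~\ref{lem:extsym}. The only thing to be careful about is that the realisation is presented with the triple $(V, g, p)$ whereas the corollary uses the pair $(V, p)$, so one should note explicitly that the colour picker witnessing the extension is the canonical one $P(t) = F(T,\tau,t)$, and then Lemma~\ref{lem:extsym} transfers verbatim.
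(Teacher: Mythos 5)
Your proposal is correct and matches the paper's intended argument: the paper presents this corollary as an immediate consequence of Lemma~\ref{lem:extsym} applied to the extension underlying $\real(T,\tau)$, exactly as you do, with the final step following from the definition of a distributed algorithm with running time $r$. Nothing is missing.
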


Put otherwise, if $(T,\tau)$ is a template with the realisation $(V,p)$, each node $t \in T$ represents an \emph{equivalence class} $p^{-1}(t) \subseteq V$ of nodes with identical outputs. For each $t \in T$, we define \note{$A(T,\tau,t)$}$A(T,\tau,t) = A(V,v)$ where $v \in p^{-1}(t)$; by Corollary~\ref{cor:realequiv}, this does not depend on the choice of $v$.

We define \note{$M(T,\tau)$}$M(T,\tau) = \{ \{u,v\} \in E(T) : A(T,\tau,u) = A(T,\tau,v) = "uv \}$. Note that $M(T,\tau)$ is always a matching in the tree $\Gamma_k(T)$, but the matching is not necessarily maximal. If $S \subseteq T$, we also define \note{$M(\cdot,\cdot,\cdot)$}$M(T,S,\tau) = \{ \{u,v\} \in M(T,\tau) : u, v \in S \}$, the restriction of $M(T,\tau)$ to $S$.

Lemma~\ref{lem:extorder} has the following corollary; it shows that a template and its extensions have the same realisations.

\begin{corollary}\label{cor:extreal}
    Let $(K,\kappa,p) = \ext(T,\tau,P)$, $(X,r) = \real(T,\tau)$, and $(L,q) = \real(K,\kappa)$. Then $X = L$, $p \circ q = r$, and $A(K,\kappa,x) = A(T,\tau,p(x))$ for all $x \in K$.
\end{corollary}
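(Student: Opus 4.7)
The plan is to reduce the corollary to a direct application of Lemma~\ref{lem:extorder}, by identifying the ``right'' two disjoint colour pickers whose combined extension realises $T$.

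First I would unpack the two realisations in terms of colour pickers. By definition, $(X,r) = \real(T,\tau) = \ext(T,\tau,P')$ where $P'(t) = F(T,\tau,t)$, and $(L,q) = \real(K,\kappa) = \ext(K,\kappa,Q_K)$ where $Q_K(k) = F(K,\kappa,k)$. The key computation is to express $Q_K$ in terms of data on $T$. By Lemma~\ref{lem:ext}, $C(K,k) = C(T,p(k)) \cup P(p(k))$, and by the definition of an extension, $\kappa(k) = \tau(p(k))$. Hence
\[
  F(K,\kappa,k) = [k] \setminus \bigl(C(T,p(k)) \cup P(p(k)) + \tau(p(k))\bigr) = F(T,\tau,p(k)) \setminus P(p(k)).
\]
Defining $Q'(t) = F(T,\tau,t) \setminus P(t)$, this reads $Q_K = Q' \circ p$. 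Since $P(t) \subseteq F(T,\tau,t)$, we also obtain the disjoint decomposition $P'(t) = P(t) \cup Q'(t)$.

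Second I would apply Lemma~\ref{lem:extorder} with the disjoint pickers $P$ and $Q'$ (and hence $R = P \cup Q' = P'$). The lemma's three extensions are exactly
\[
  (K,\kappa,p) = \ext(T,\tau,P), \quad (L,\lambda,q) = \ext(K,\kappa,Q' \circ p), \quad (X,\xi,r) = \ext(T,\tau,P'),
\]
matching our realisations on the nose. The lemma therefore yields $X = L$ (as subsets of $G_k$) and $p \circ q = r$, which are the first two claims.

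For the third claim, I would use $p \circ q = r$ together with the equivalence-class interpretation of $A$ on realisations. Fix $x \in K$ and pick any $\ell \in q^{-1}(x) \subseteq L = X$; then $r(\ell) = p(q(\ell)) = p(x)$. By definition, $A(K,\kappa,x) = A(L,\ell)$ and $A(T,\tau,p(x)) = A(X,\ell)$, and since $X = L$ as colour systems these two evaluations of $A$ coincide. The main obstacle is simply the bookkeeping verification that $Q_K = Q' \circ p$ and that $P, Q'$ are indeed disjoint with $P \cup Q' = P'$; once that is in place, everything else is a direct invocation of Lemma~\ref{lem:extorder} and the definition of $A(\cdot,\cdot,\cdot)$ on templates.
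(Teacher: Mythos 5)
Your proposal is correct and matches the paper's intended argument: the corollary is stated as a direct consequence of Lemma~\ref{lem:extorder}, and your choice of the disjoint pickers $P$ and $Q'(t) = F(T,\tau,t) \setminus P(t)$ with $P \cup Q'$ equal to the full free-colour picker, together with the identification $F(K,\kappa,x) = F(T,\tau,p(x)) \setminus P(p(x))$ via Lemma~\ref{lem:ext}, is exactly the bookkeeping the paper leaves implicit. The final step deducing $A(K,\kappa,x) = A(T,\tau,p(x))$ from $p \circ q = r$ and the equivalence-class definition of $A$ on templates (Corollary~\ref{cor:realequiv}) is also the intended one.
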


The following lemma is yet another application of the symmetry that we have in extensions: if a template has free colours (i.e., $h < d$), then an algorithm produces a perfect matching in the realisation of the template.

\begin{lemma}\label{lem:realperf}
    Assume that $(T,\tau)$ is an $h$-template with $h < d$. Then $A(T, \tau, t) \ne \bot$ for all $t \in T$.
\end{lemma}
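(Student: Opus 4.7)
The plan is to prove Lemma by contradiction, exploiting the symmetry enforced by extensions (Corollary~\ref{cor:realequiv}) together with maximality property~\ref{M3} of the matching.

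Let $(V,p) = \real(T,\tau)$, so $V$ is a $d$-regular colour system and $A$ produces a maximal matching on~$V$. Suppose, for contradiction, that $A(T,\tau,t) = \bot$ for some $t \in T$, and pick any $v \in p^{-1}(t)$, so that $A(V,v) = \bot$. Since $(T,\tau)$ is an $h$-template with $h < d$, the free-colour set $F(T,\tau,t)$ is non-empty; by construction of the realisation, the colour picker $P$ used in $\real(T,\tau)$ satisfies $P(t) = F(T,\tau,t)$, so we may fix some $c^\ast \in P(t)$.

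The key step is to argue that the neighbour $vc^\ast \in V$ lies in the same equivalence class as $v$, i.e.\ $p(vc^\ast) = t$. By Lemma~\ref{lem:ext} we have $c^\ast \in P(p(v)) \subseteq C(V,v)$, so $vc^\ast$ is indeed a neighbour of $v$ in $\Gamma_k(V)$. If $v = e$ then $vc^\ast = c^\ast \leadsto e$ by clause~(i) of the definition of $\leadsto$. If $v \ne e$ and $c^\ast \ne \tail(v)$ then clause~(ii) gives $vc^\ast \leadsto t$. Finally, if $v \ne e$ and $c^\ast = \tail(v)$, then $vc^\ast = \pred(v)$, and observation~(iv) listed in Section~\ref{ssec:templ} (``$x \leadsto t$, $\tail(x) \in P(t) \Rightarrow \pred(x) \leadsto t$'') again yields $vc^\ast \leadsto t$. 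In every case $p(vc^\ast) = t = p(v)$.

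Corollary~\ref{cor:realequiv} then forces $A(V,vc^\ast) = A(V,v) = \bot$. But $c^\ast \in C(V,v)$ together with $A(V,v) = \bot$ contradicts property~\ref{M3}, which demands $A(V,vc^\ast) \ne \bot$. Hence our assumption was false, and $A(T,\tau,t) \ne \bot$ for every $t \in T$. The only real obstacle is the small bookkeeping of the three cases when verifying $p(vc^\ast) = t$; everything else is an immediate consequence of the realisation being $d$-regular and of the equivalence classes induced by $p$.
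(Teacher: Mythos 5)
Your proof is correct and takes essentially the same route as the paper: realise the template, pick a free colour $c \in F(T,\tau,t)$, observe that the neighbour $vc$ lies in the same equivalence class $p^{-1}(t)$, and contradict property~\ref{M3} via Corollary~\ref{cor:realequiv}. The only difference is cosmetic: your three-case verification of $p(vc^\ast)=t$ is subsumed by the single listed observation that $x \leadsto t$ and $c \in P(t)$ imply $xc \leadsto t$, which is what the paper's one-line assertion $p(u)=p(v)=t$ relies on.
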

\begin{proof}
    Let $(V,p) = \real(T,\tau)$, $t \in T$, and $v \in p^{-1}(t)$. If $h < d$, there exists a $c \in F(T,\tau,t)$. Let $u = vc$; we have $p(u) = p(v) = t$, $c \in C(V,v)$, and $A(V,u) = A(V,v) = A(T, \tau, t)$. Now $A(T, \tau, t) = \bot$ would contradict property~\ref{M3}.
\end{proof}

\subsection{Zero-Templates}

Let \note{$Z$\\$\hat c$}$Z = \{ e \}$ be the colour system with only one node. For each $c \in [k]$, let $\hat{c}$ denote the function $\hat{c}\colon Z \to [k]$ that maps $\hat{c}(e) = c$. Now $(Z,\hat{c})$ is a $0$-template for each $c \in [k]$.

If $A$ is the greedy algorithm, we have $A(Z, \hat 1, e) = 2$ and $A(Z, \hat 3, e) \ne 2$. The following lemma generalises this observation.

\begin{lemma}\label{lem:0temp}
    There are distinct colours $c_1, c_2, c_3 \in [k]$ such that
    $A(Z, \hat c_1, e) = c_2$ and
    $A(Z, \hat c_3, e) \ne c_2$.
\end{lemma}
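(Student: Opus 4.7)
The plan is to (a) extract a function $\phi: [k] \to [k]$ with $\phi(c) \ne c$ from $A$'s behaviour on the realisations of zero-templates, and then (b) pick the required triple out of $\phi$ by a short case analysis that relies on $k \ge 3$.

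For step (a), I fix $c \in [k]$ and examine $(V, p) = \real(Z, \hat c)$. Because $C(Z, e) = \emptyset$, the colour picker used for the realisation is $P(e) = F(Z, \hat c, e) = [k] - c$, so by Lemma~\ref{lem:ext} the system $V$ is $d$-regular with $C(V, v) = [k] - c$ for every $v$. The map $p$ is forced to be the constant function $v \mapsto e$, so Corollary~\ref{cor:realequiv} tells me that $A(V, \cdot)$ is constant on $V$. Since $h = 0 < d$ (as $d = k - 1 \ge 2$), Lemma~\ref{lem:realperf} rules out the output $\bot$, and then property~\ref{M1} forces the common value $A(Z, \hat c, e)$ into $C(V, e) = [k] - c$. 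Defining $\phi(c) = A(Z, \hat c, e)$ therefore yields a function $\phi: [k] \to [k]$ with $\phi(c) \ne c$ for every $c$.

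For step (b), I pick any $c_1 \in [k]$ and set $c_2 = \phi(c_1)$, noting $c_1 \ne c_2$. \emph{First attempt}: if some $c_3 \in [k] \setminus \{c_1, c_2\}$ has $\phi(c_3) \ne c_2$, then $(c_1, c_2, c_3)$ is already as required. If this fails, then $\phi$ sends every element of $[k] \setminus \{c_1, c_2\}$ to $c_2$, and I switch to the triple $(c_2, \phi(c_2), c_3')$ for any $c_3' \in [k] \setminus \{c_2, \phi(c_2)\}$; such a $c_3'$ exists because $k \ge 3$. The three entries are distinct by construction (using $\phi(c_2) \ne c_2$ for the first pair and the choice of $c_3'$ for the other two), and $\phi(c_3') \ne \phi(c_2)$ reduces to checking $\phi(c_3') = c_2$, which holds both when $c_3' = c_1$ (by definition of $c_2$) and when $c_3' \ne c_1$ (then $c_3' \in [k] \setminus \{c_1, c_2\}$, so by the failure of the first attempt). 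The only delicate point is this fall-back; without $k \ge 3$ there would simply be no room to pick $c_3'$, consistent with the separate treatment of $k \le 2$ in Lemma~\ref{lem:lbtriv}.
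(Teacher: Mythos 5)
Your proposal is correct and follows essentially the same route as the paper: you extract the fixed-point-free map $c \mapsto A(Z,\hat c,e)$ via the realisation of $(Z,\hat c)$ together with Lemma~\ref{lem:realperf} and property~\ref{M1}, and then finish with an elementary case analysis exploiting $k \ge 3$. The only difference is how the cases are organised (the paper splits on whether $h(h(1)) = 1$, you split on whether some third colour avoids $c_2$ and otherwise change base point), which is an immaterial variation; your verification of the fall-back case is sound.
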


\begin{proof}
    For each $c \in [k]$, let $h(c) = A(Z, \hat c, e)$. By Lemma~\ref{lem:realperf}, we have $h(c) \in [k]$ for each $c \in [k]$. Moreover, $h(c) \in [k] - \hat{c}(e) = [k] - c$. Hence we have a function $h\colon [k] \to [k]$ that does not have any fixed points.
    \begin{enumerate}
        \item Assume that $h(h(1)) \ne 1$. Then we can choose $c_1 = h(1)$, $c_2 = h(h(1))$, and $c_3 = 1$.
        \item Assume that $h(h(1)) = 1$. Let $c \in [k] - \{ 1, h(1) \}$. If $h(c) = h(1)$, we can choose $c_1 = h(1)$, $c_2 = 1$, and $c_3 = c$. If $h(c) \ne h(1)$, we can choose $c_1 = 1$, $c_2 = h(1)$, and $c_3 = c$. \qedhere
    \end{enumerate}
\end{proof}

\subsection{Compatible Templates and Critical Pairs}\label{ssec:crit}

Let $h \ge 1$. We say that templates $(S,\sigma)$ and $(T,\tau)$ are \emph{$h$-compatible} if
\begin{enumerate}[label=(C\arabic*),noitemsep]
    \item\label{C1} $S[h] = T[h]$,
    \item\label{C2} $\sigma[h-1] = \tau[h-1]$.
\end{enumerate}
We emphasise that $h$-compatible templates are not necessarily $h$-templates.

We say that $(S,\sigma)$ and $(T,\tau)$ form an \emph{$h$-critical pair} if they are $h$-compatible $h$-templates and they satisfy the following additional properties:
\begin{enumerate}[resume*]
    \item\label{C3} $A(T,\tau,e) \notin C(T,e)$,
    \item\label{C4} $A(S,\sigma,s) \in C(S,s)$ for each $s \in S$.
\end{enumerate}
If $h < d$, Lemma~\ref{lem:realperf} and property~\ref{C3} imply that $A(T,\tau,e) \in F(T,\tau,e)$. Property~\ref{C4} implies that $M(S,\sigma)$ is a perfect matching in $\Gamma_k(S)$, while property~\ref{C3} implies that $M(T,\tau)$ cannot be a perfect matching in $\Gamma_k(T)$.

\subsection{Base Case}\label{ssec:base}

In this section we show that there exists a $1$-critical pair. Choose $c_1, c_2, c_3 \in [k]$ as in Lemma~\ref{lem:0temp} and let $c_4 = A(Z, \hat c_3, e)$. Note that $c_4 \ne c_2$; however, we may have $c_4 = c_1$.

Let $K = L = X = \{ e, c_2 \}$. Define $\kappa(e) = \kappa(c_2) = \xi(e) = c_1$ and $\lambda(e) = \lambda(c_2) = \xi(c_2) = c_3$. Now $(K,\kappa)$, $(L,\lambda)$, and $(X,\xi)$ are $1$-templates; the construction is illustrated below:
\begin{center}
    \includegraphics[page=8]{figs.pdf} 
\end{center}
If $p(e) = p(c_2) = e$ and $P(e) = c_2$, we have $(K,\kappa,p) = \ext(Z,\hat c_1,P)$ and $(L,\lambda,p) = \ext(Z,\hat c_3,P)$. Therefore $A(K,\kappa,v) = c_2$ for each $v \in K$ and $A(L,\lambda,v) = c_4$ for each $v \in L$.

Now we construct $1$-templates \note{$S_1,\sigma_1$\\$T_1,\tau_1$}$(S_1,\sigma_1)$ and $(T_1,\tau_1)$ as follows:
\begin{enumerate}[label=(\roman*),noitemsep]
    \item\label{basecase1} If $A(X,\xi,e) \ne c_2$, we choose $(S_1,\sigma_1) = (K,\kappa)$ and $(T_1,\tau_1) = (X,\xi)$.
    \item\label{basecase2} If $A(X,\xi,e) = c_2$, we choose $(S_1,\sigma_1) = ("c_2 X,"c_2 \xi)$ and $(T_1,\tau_1) = ("c_2 L,"c_2 \lambda)$.
\end{enumerate}

\begin{lemma}\label{lem:base}
    Templates $(S_1,\sigma_1)$ and $(T_1,\tau_1)$ form a $1$-critical pair.
\end{lemma}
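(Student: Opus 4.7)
The plan is to verify the four defining properties (C1)--(C4) of a $1$-critical pair in each of the two cases of the construction, using two preparatory observations. First, Corollary~\ref{cor:extreal} applied to $(K,\kappa,p)=\ext(Z,\hat c_1,P)$ and $(L,\lambda,p)=\ext(Z,\hat c_3,P)$ makes the algorithm constant on these templates: $A(K,\kappa,x)=A(Z,\hat c_1,e)=c_2$ for every $x\in K$, and $A(L,\lambda,x)=A(Z,\hat c_3,e)=c_4$ for every $x\in L$. Second, since $y\mapsto "c_2 y$ is a colour-preserving isomorphism between $\Gamma_k(T)$ and $\Gamma_k("c_2 T)$, it lifts to an isomorphism between the realisations of $(X,\xi)$ and $("c_2 X,"c_2 \xi)$ (and likewise for $L$), so the algorithm's outputs agree at corresponding nodes: $A("c_2 X,"c_2 \xi,e)=A(X,\xi,c_2)$, $A("c_2 X,"c_2 \xi,c_2)=A(X,\xi,e)$, and $A("c_2 L,"c_2 \lambda,e)=A(L,\lambda,c_2)=c_4$.

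In case~(i), $S_1=K$ and $T_1=X$ share the node set $\{e,c_2\}$, giving (C1), and both assign $c_1$ as the forbidden colour at $e$, giving (C2). Property (C3) is exactly the case hypothesis $A(X,\xi,e)\ne c_2=C(X,e)$, and property (C4) is $A(K,\kappa,s)=c_2\in\{c_2\}=C(K,s)$ at each $s\in K$. In case~(ii), $S_1="c_2 X$ and $T_1="c_2 L$ again have node set $\{e,c_2\}$ because $c_2$ is an involution, giving (C1); a substitution yields $("c_2 \xi)(e)=\xi(c_2)=c_3=\lambda(c_2)=("c_2 \lambda)(e)$, giving (C2). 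For (C3), the transport formula gives $A(T_1,\tau_1,e)=c_4$, and Lemma~\ref{lem:0temp}'s guarantee $c_4\ne c_2$ together with $C(T_1,e)=\{c_2\}$ completes the check. For (C4), the case hypothesis $A(X,\xi,e)=c_2$ combined with property~(M2) forces $A(X,\xi,c_2)=c_2$, so transport gives $A(S_1,\sigma_1,e)=A(S_1,\sigma_1,c_2)=c_2\in\{c_2\}=C(S_1,s)$.

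The proof is essentially bookkeeping. The only non-routine step is the transport of algorithm outputs along $y\mapsto "c_2 y$ in case~(ii), which invokes the isomorphism lemma and the definition of a distributed algorithm. The deeper content---that there exist distinct $c_1,c_2,c_3$ with $c_4=A(Z,\hat c_3,e)\ne c_2$---has already been supplied by Lemma~\ref{lem:0temp}, and is precisely what delivers (C3) in case~(ii).
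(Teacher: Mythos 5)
Your proof is correct and takes essentially the same route as the paper: the same two-case split, with (C1)--(C4) checked from $A(K,\kappa,\cdot)\equiv c_2$, $A(L,\lambda,\cdot)\equiv c_4$, the case hypothesis, and $c_4\ne c_2$. The only difference is that you spell out two steps the paper leaves implicit, namely the translation invariance $A("c_2X,"c_2\xi,x)=A(X,\xi,c_2x)$ (and likewise for $L$) and the use of property~\ref{M2} to obtain $A(X,\xi,c_2)=c_2$ in case~(ii).
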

\begin{proof}
    We have $S_1[1] = T_1[1] = K = L = X = \{ e, c_2 \}$, verifying property~\ref{C1}. To verify \ref{C2}, note that case~\ref{basecase1} implies $\sigma_1(e) = \tau_1(e) = c_1$ and case~\ref{basecase2} implies $\sigma_1(e) = \tau_1(e) = c_3$. To verify property~\ref{C3}, observe that $A(T_1,\tau_1,e) \ne c_2$ while $C(T_1,e) = \{ c_2 \}$. To verify property~\ref{C4}, observe that $A(S_1,\sigma_1,s) = c_2$ and $C(S_1,s) = \{ c_2 \}$ for all $s \in S_1$.
\end{proof}

\subsection{Inductive Step}\label{ssec:ind}

Now assume that $(S_h,\sigma_h)$ and $(T_h,\tau_h)$ form an $h$-critical pair, where $1 \le h < d$. In this section, we will construct an $(h+1)$-critical pair $(S_{h+1},\sigma_{h+1})$ and $(T_{h+1},\tau_{h+1})$.

Recall that Lemma~\ref{lem:realperf} implies that $A(S_h,\sigma_h,s) \ne \bot$ for all $s \in S_h$ and $A(T_h,\tau_h,t) \ne \bot$ for all $t \in T_h$. We define two colour pickers as follows; see Figures \ref{fig:ind} and \ref{fig:ind2} for illustrations.
\begin{figure}
    \begin{center}
        \includegraphics[page=9]{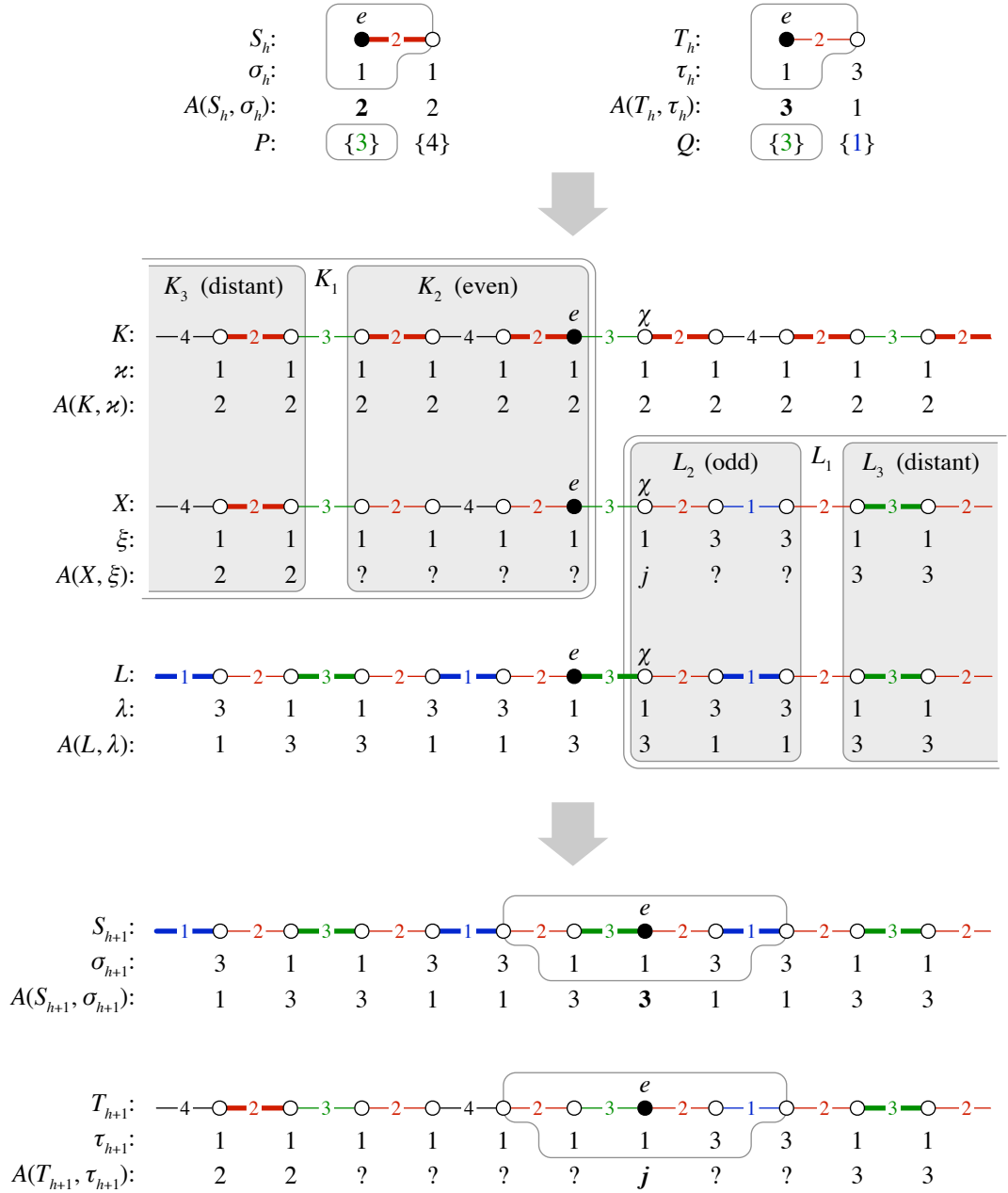} 
    \end{center}
    \caption{Inductive step. In this example, $h = 1$ and $\chi = 3$. We assume that $j \notin \{2,3\}$, and thus we can choose $y = \chi$ in Lemma~\ref{lem:critnode}.}\label{fig:ind}
\end{figure}
\begin{enumerate}[label=(\roman*)]
    \item Define a $1$-colour picker $Q$ for $(T_h,\tau_h)$ as follows. Let $t \in T_h$. If $A(T_h,\tau_h,t) \in F(T_h,\tau_h,t)$, we choose $Q(t) = \{ A(T_h,\tau_h,t) \}$. Otherwise we choose an arbitrary free colour $c \in F(T_h,\tau_h,t)$ and set $Q(t) = \{ c \}$.
    \item Define a $1$-colour picker $P$ for $(S_h,\sigma_h)$ as follows. Let $s \in S_h$. If $|s| \le h-1$, we have $s \in T_h$ and $F(S_h,\sigma_h,s) = F(T_h,\tau_h,s)$; hence we can choose $P(s) = Q(s)$. Otherwise we choose an arbitrary free colour $c \in F(S_h,\sigma_h,s)$ and set $P(s) = \{ c \}$.
\end{enumerate}
Let $(K,\kappa,p) = \ext(S_h,\sigma_h,P)$, $(L,\lambda,q) = \ext(T_h,\tau_h,Q)$, and $\chi = A(T_h,\tau_h,e)$. We make the following observations:
\begin{enumerate}[noitemsep]
    \item $(K,\kappa)$ and $(L,\lambda)$ are $(h+1)$-templates,
    \item $(K,\kappa)$ and $(L,\lambda)$ are $h$-compatible,
    \item $\{e,\chi\} \in E(K)$ and $\{e,\chi\} \in E(L)$,
    \item $p(e) = p(\chi) = e$ and $q(e) = q(\chi) = e$,
    \item $"\chi K = K$, $"\chi \kappa = \kappa$, $"\chi L = L$, and $"\chi \lambda = \lambda$,
    \item $A(K,\kappa,v) \in C(K,v)$ for each $v \in K$, i.e., $M(K,\kappa)$ is a perfect matching in $\Gamma_k(K)$,
    \item $A(L,\lambda,v) \in C(L,v)$ for each $v \in L$, i.e., $M(L,\lambda)$ is a perfect matching in $\Gamma_k(L)$,
    \item $\{e,\chi\} \notin M(K,\kappa)$ but $\{e,\chi\} \in M(L,\lambda)$.
\end{enumerate}

\begin{figure}
    \begin{center}
        \includegraphics[page=10]{figs.pdf} 
    \end{center}
    \caption{Inductive step. In this example, $h = 2$ and $\chi = 4$.}\label{fig:ind2}
\end{figure}

Now we will use $(K,\kappa)$ and $(L,\lambda)$ to construct a new $(h+1)$-template $(X,\xi)$; refer to Figure~\ref{fig:ind}. Let
$K_1 = \prune(K,\chi)$,
$L_1 = \chi \prune("\chi L,\chi)$, and
$X = K_1 \cup L_1$.
Define $\xi\colon X \to [k]$ as follows: $\xi(v) = \kappa(v)$ for all $v \in K_1$ and $\xi(v) = \lambda(v)$ for all $v \in L_1$. We make the following observations:
\begin{enumerate}[noitemsep]
    \item $(X,\xi)$ is an $(h+1)$-template,
    \item $(X,\xi)$, $(K,\kappa)$, and $(L,\lambda)$ are pairwise $h$-compatible,
    \item $("\chi X,"\chi \xi)$, $("\chi K,"\chi \kappa)$, and $("\chi L,"\chi \lambda)$ are pairwise $h$-compatible.
    \item $("yX,"y\xi)$ and $("yK,"y\kappa)$ are $(h+1)$-compatible for any $y \in K_1$,
    \item $("yX,"y\xi)$ and $("yL,"y\lambda)$ are $(h+1)$-compatible for any $y \in L_1$.
\end{enumerate}
Hence we have a family of $(h+1)$-compatible $(h+1)$-templates; however, we need to construct an $(h+1)$-critical pair.

\begin{lemma}\label{lem:critnode}
    There exists a node $y \in X$ such that $A(X,\xi,y) \notin C(X,y)$.
\end{lemma}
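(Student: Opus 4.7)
The plan is to argue by contradiction. Assume $A(X,\xi,y) \in C(X,y)$ for every $y \in X$; then by property~\ref{M2}, $M(X,\xi)$ is a perfect matching of the tree $\Gamma_k(X)$. The strategy is to show that this perfect matching must coincide with $M(K,\kappa)$ on $K_1 \setminus \{e\}$ and with $M(L,\lambda)$ on $L_1 \setminus \{\chi\}$, and then to derive a contradiction at the interface edge $\{e,\chi\}$, where the two forced matchings disagree about $e$'s partner.

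Concretely, I claim that $A(X,\xi,v) = A(K,\kappa,v)$ for every $v \in K_1 \setminus \{e\}$, with the symmetric statement on $L_1 \setminus \{\chi\}$. I prove this by induction on decreasing $|v|$. For the base case, any $v \in K_1$ with $|v| > r$ has its $(r+1)$-ball confined to $K_1$ in both the realisation of $X$ and the realisation of $K$; since $\xi$ agrees with $\kappa$ throughout $K_1$ by construction of $\xi$, the two realisations coincide on this ball, so $A$ returns the same value at $v$. For the inductive step at $v \in K_1 \setminus \{e\}$, assume the claim for all $v' \in K_1$ with $|v'| > |v|$ and set $c = A(K,\kappa,v)$. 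If $c \ne \tail(v)$, then $v$'s match in $K$ is the child $vc$; reciprocity in $K$ gives $A(K,\kappa,vc) = c$, the inductive hypothesis gives $A(X,\xi,vc) = c$, and reciprocity in $M(X,\xi)$ forces $A(X,\xi,v) = c$. If $c = \tail(v)$, then every child $vc'$ of $v$ satisfies $A(K,\kappa,vc') \ne c'$; the inductive hypothesis propagates this to $X$, so no child of $v$ accepts $v$ as a partner in $M(X,\xi)$, and the matching hypothesis forces $v$ to match $\pred(v)$, i.e.\ $A(X,\xi,v) = \tail(v)$.

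Combining the two propagation statements gives the contradiction. Let $\alpha = A(S_h,\sigma_h,e)$; by~\ref{C4} we have $\alpha \in C(S_h,e)$, and $\alpha \ne \chi$ since $\chi \notin C(S_h,e)$. Reciprocity in $M(K,\kappa)$ yields $A(K,\kappa,e\alpha) = \alpha$, so propagation gives $A(X,\xi,e\alpha) = \alpha$, whence $e\alpha$ is matched with $e$ in $X$ and $A(X,\xi,e) = \alpha$. On the other hand, for every $c \in C(S_h,e) = C(T_h,e)$, set $\delta_c = A(T_h,\tau_h,c)$; if $\delta_c = c$, then reciprocity in $M(T_h,\tau_h)$ would give $A(T_h,\tau_h,e) = c \in C(T_h,e)$, contradicting property~\ref{C3}. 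Hence $\delta_c \ne c$, and by Corollary~\ref{cor:extreal} together with the propagation claim on $L_1$, $A(X,\xi,\chi c) = \delta_c \ne c$, so $\chi c$ is matched with $\chi c \delta_c \ne \chi$ in $X$. Consequently no non-$e$ neighbour of $\chi$ accepts $\chi$ as a partner, and the matching hypothesis forces $\chi$ to match $e$, yielding $A(X,\xi,e) = \chi$ and contradicting $A(X,\xi,e) = \alpha$.

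The main technical obstacle is the base case of the propagation claim: one must verify that the two realisations produce identical $(r+1)$-neighbourhoods at a deep vertex. This reduces to the observations that extension by free colours is determined by the forbidden-colour labelling, that $\xi$ agrees with $\kappa$ on $K_1$ (and with $\lambda$ on $L_1$) by construction of $\xi$, and that the $(r+1)$-ball of any $v$ with $|v| > r$ avoids both endpoints of the interface edge $\{e,\chi\}$.
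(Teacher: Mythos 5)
Your proof is correct, but it takes a genuinely different route from the paper. The paper avoids any contradiction argument: it splits the matched edges of $M(K,K_1,\kappa)$ and $M(L,L_1,\lambda)$ into ``distant'' edges (both endpoints at depth $>r+1$), whose membership in $M(X,\xi)$ transfers verbatim because the $(r+1)$-views agree, and ``near'' edges, whose endpoints form finite sets $K_2$ (even size) and $L_2\cup\{\chi\}$ (odd size); a parity count then shows some near node must be unmatched. You instead assume $M(X,\xi)$ is perfect and run a downward induction on $|v|$ to show the matching on $X$ is forced to agree with $M(K,\kappa)$ on $K_1\setminus\{e\}$ and with $M(L,\lambda)$ on $L_1\setminus\{\chi\}$, then collide the two forced behaviours at the interface edge $\{e,\chi\}$: the $K$-side forces $A(X,\xi,e)=A(S_h,\sigma_h,e)\ne\chi$ via~\ref{M2}, while the $L$-side (using \ref{C3} to rule out $\delta_c=c$) forces $A(X,\xi,\chi)=\chi$ and hence $A(X,\xi,e)=\chi$. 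Both arguments rest on the same key ingredient---that deep nodes have identical $(r+1)$-views in the realisations of $(X,\xi)$ and of $(K,\kappa)$ or $(L,\lambda)$---but your version determines the entire matching under the perfection hypothesis, at the cost of the level-by-level bookkeeping, whereas the paper's parity argument is nonconstructive, shorter, and in addition localises the unmatched node $y$ to the finite ``near'' region (information the paper does not actually need, since any $y$ suffices for Lemma~\ref{lem:ind}). One small imprecision in your write-up: for $|v|=r+1$ the closed $(r+1)$-ball does contain the endpoint $e$, so it does not literally ``avoid both endpoints of the interface edge''; the views still coincide because only the node set up to radius $r+1$ and the labels and colour sets up to radius $r$ matter, and these agree on both sides ($\xi=\kappa$ on $K_1$, $\xi=\lambda$ on $L_1$, and $C(X,e)=C(K,e)$, $C(X,\chi)=C(L,\chi)$), so this is a boundary technicality rather than a gap.
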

\begin{proof}
    We say that an edge $\{u,v\}$ is \emph{distant} if $|u| > r+1$ and $|v| > r+1$; otherwise it is \emph{near}.

    Set $M(K,\kappa)$ is a perfect matching in $\Gamma_k(K)$. Moreover, $\{e,\chi\} \notin M(K,\kappa)$; therefore we have either $\{u,v\} \subseteq K_1$ or $\{u,v\} \cap K_1 = \emptyset$ for each $\{u,v\} \in M(K,\kappa)$. It follows that $\bigcup M(K,K_1,\kappa) = K_1$. Let $K'_3 \subseteq M(K,K_1,\kappa)$ consists of the edges that are distant, and let $K'_2 = M(K,K_1,\kappa) \setminus K'_3$ consist of the edges that are near. Define $K_2 = \bigcup K'_2$ and $K_3 = \bigcup K'_3$; see Figure~\ref{fig:ind} for an illustration.

    Set $M(L,\lambda)$ is a perfect matching in $\Gamma_k(L)$. Moreover, $\{e,\chi\} \in M(L,\lambda)$; this is the unique edge that joins $L_1$ and $L \setminus L_1$. Therefore we have $\bigcup M(L,L_1,\lambda) = L_1 - \chi$. Let $L'_3 \subseteq M(L,L_1,\lambda)$ consists of the edges that are distant, and let $L'_2 = M(L,L_1,\lambda) \setminus L'_3$ consist of the edges that are near. Define $L_2 = (\bigcup L'_2) + \chi$ and $L_3 = \bigcup L'_3$.
    
    It follows that
    \begin{enumerate}[noitemsep]
        \item $K_3$, $K_2$, $L_2$, and $L_3$ form a partition of $X$,
        \item $("vK)[r+1] = ("vX)[r+1]$ and $("v\kappa)[r+1] = ("v\xi)[r+1]$ for any $v \in K_3$,
        \item $("vL)[r+1] = ("vX)[r+1]$ and $("v\lambda)[r+1] = ("v\xi)[r+1]$ for any $v \in L_3$,
        \item $A(K,\kappa,v) = A(X,\xi,v)$ for any $v \in K_3$,
        \item $A(L,\lambda,v) = A(X,\xi,v)$ for any $v \in L_3$,
        \item $\{u,v\} \in K'_3 \cup L'_3$ implies $\{u,v\} \in M(X,\xi)$,
        \item $K_2$ is a finite set with an even number of nodes,
        \item $L_2$ is a finite set with an odd number of nodes.
    \end{enumerate}
    By a parity argument, there is a node $y \in K_2 \cup L_2$ such that $y \notin \bigcup M(X,\xi)$, i.e., $A(X,\xi,y) \notin C(X,y)$.
\end{proof}

Now choose $y$ as in Lemma~\ref{lem:critnode}, and define $(S_{h+1},\sigma_{h+1})$ and $(T_{h+1},\tau_{h+1})$ as follows:
\begin{enumerate}[noitemsep]
    \item If $y \in K_1$, define\note{$S_i,\sigma_i$\\$T_i,\tau_i$}
        $S_{h+1} = "yK$,
        $\sigma_{h+1} = "y\kappa$,
        $T_{h+1} = "yX$, and
        $\tau_{h+1} = "y\xi$.
    \item If $y \in L_1$, define
        $S_{h+1} = "yL$,
        $\sigma_{h+1} = "y\lambda$,
        $T_{h+1} = "yX$, and
        $\tau_{h+1} = "y\xi$.
\end{enumerate}

\begin{lemma}\label{lem:ind}
    Templates $(S_{h+1},\sigma_{h+1})$ and $(T_{h+1},\tau_{h+1})$ form an $(h+1)$-critical pair.
\end{lemma}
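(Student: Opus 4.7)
The proof is essentially a bookkeeping verification of conditions \ref{C1}--\ref{C4} using the eight numbered observations collected after the construction of $(X,\xi)$ together with Lemma~\ref{lem:critnode}. The plan is to handle case (i), $y \in K_1$, in detail; case (ii), $y \in L_1$, is obtained by replacing $K$, $\kappa$ with $L$, $\lambda$ throughout.

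First I would note that both $(S_{h+1},\sigma_{h+1}) = ({}^{-1}yK,\,{}^{-1}y\kappa)$ and $(T_{h+1},\tau_{h+1}) = ({}^{-1}yX,\,{}^{-1}y\xi)$ are $(h+1)$-templates, because $(K,\kappa)$ and $(X,\xi)$ were observed to be $(h+1)$-templates and left-multiplication by ${}^{-1}y$ is an adjacency- and colour-preserving isomorphism (the ``isomorphism'' lemma preceding the discussion of colour pickers). Conditions \ref{C1} and \ref{C2} then drop out of the observation (item~4 after the construction of $X$) that $({}^{-1}yX,{}^{-1}y\xi)$ and $({}^{-1}yK,{}^{-1}y\kappa)$ are $(h+1)$-compatible.

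For \ref{C3} and \ref{C4} I need the principle that the algorithm is invariant under left-translation of a template: $A({}^{-1}yT,{}^{-1}y\tau,s) = A(T,\tau,ys)$ for every $s \in {}^{-1}yT$. This I would establish once and reuse. The justification is that if $(V,p) = \real(T,\tau)$ and $v \in p^{-1}(y)$, then $({}^{-1}vV,\, z \mapsto {}^{-1}y\,p(vz))$ is a realisation of $({}^{-1}yT,{}^{-1}y\tau)$; for any $s \in {}^{-1}yT$ the $(r+1)$-neighbourhood of the corresponding node in this realisation is isomorphic as a coloured structure to the $(r+1)$-neighbourhood of a node above $ys$ in $V$, so the distributed algorithm returns the same value. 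Similarly $C({}^{-1}yU,s) = C(U,ys)$ for any colour system $U$ with $ys \in U$.

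With this translation principle in hand, \ref{C3} reduces to $A(X,\xi,y) \notin C(X,y)$, which is exactly Lemma~\ref{lem:critnode}, and \ref{C4} reduces to $A(K,\kappa,v) \in C(K,v)$ for every $v \in K$, which is observation~6: $M(K,\kappa)$ is a perfect matching in $\Gamma_k(K)$. Case~(ii) is identical, with $A(L,\lambda,v) \in C(L,v)$ from observation~7 supplying \ref{C4}. The only step requiring care is the translation invariance in the preceding paragraph; everything else is a direct quotation of items already on the table.
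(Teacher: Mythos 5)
Your proposal is correct and follows essentially the same route as the paper: reduce \ref{C1}--\ref{C2} to the observed $(h+1)$-compatibility of $({}\bar{y}X,\bar{y}\xi)$ with $(\bar{y}K,\bar{y}\kappa)$ (resp.\ $(\bar{y}L,\bar{y}\lambda)$), get \ref{C3} from Lemma~\ref{lem:critnode}, and get \ref{C4} from the perfect-matching observation for $(K,\kappa)$ (resp.\ $(L,\lambda)$). The only difference is that you spell out the translation-invariance identity $A(\bar{y}T,\bar{y}\tau,s)=A(T,\tau,ys)$, which the paper uses implicitly; that is a harmless (indeed welcome) extra detail, not a different argument.
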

\begin{proof}
    First, assume that $y \in K_1$. We have already observed that $(S_{h+1},\sigma_{h+1}) = ("yK,"y\kappa)$ and $(T_{h+1},\tau_{h+1}) = ("yX,"y\xi)$ are $(h+1)$-compatible. Moreover, we have $A(T_{h+1},\tau_{h+1},e) = A(X,\xi,y) \notin C(X,y) = C(T_{h+1},e)$ and $A(S_{h+1},\sigma_{h+1},s) = A(K,\kappa,ys) \in C(K,ys) = C(S_{h+1},s)$ for each $s \in S_{h+1}$. Hence $(S_{h+1},\sigma_{h+1})$ and $(T_{h+1},\tau_{h+1})$ form an $(h+1)$-critical pair.
    
    The case of $y \in L_1$ is analogous.
\end{proof}

By induction, there are $d$-templates $(S_d,\sigma_d)$ and $(T_d,\tau_d)$ that form a $d$-critical pair. Theorem~\ref{thm:lb} follows by choosing $U = S_d$ and $V = T_d$.

\section*{Acknowledgements}

We thank Mika G\"o\"os for comments and feedback, and Petteri Kaski, Christoph Lenzen, Joel Rybicki, and Roger Wattenhofer for discussions. This work was supported in part by the Academy of Finland, Grants 132380 and 252018, the Research Funds of the University of Helsinki, and the Finnish Cultural Foundation.

\clearpage
{\small

}

\end{document}